\newcommand{\mypar}[1]{\noindent\textbf{#1}}
\newcommand{\myskip}{\vspace{2pt}}
\definecolor{Azure1}{rgb}{0.95,1,1}
\lstdefinestyle{mystyle}{
    backgroundcolor=\color{Azure1},
    columns=fixed,
    basicstyle=\ttfamily\small,
    basewidth=0.5em,
    breakatwhitespace=false,
    breaklines=true,
    captionpos=b,
    frame=single,
    keepspaces=true,
    numbers=left,
    numberstyle=\small,
    numbersep=5pt,
    showspaces=false,
    showstringspaces=true,
    showtabs=false,
    tabsize=2
}
\lstdefinestyle{mystyle2}{
    backgroundcolor=\color{Azure1},
    columns=fixed,
    basicstyle=\ttfamily\tiny,
    basewidth=0.55em,
    breakatwhitespace=false,
    breaklines=true,
    captionpos=b,
    frame=single,
    keepspaces=true,
    numbers=left,
    numberstyle=\tiny,
    numbersep=5pt,
    showspaces=false,
    showstringspaces=true,
    showtabs=false,
    tabsize=2
}
\crefname{lstlisting}{Listing}{Listings}
\newtheorem{assumption}{Assumption}
\crefname{assumption}{Assumption}{Assumptions}
\title{Universal Finite-State and Self-Stabilizing Computation in Anonymous Dynamic Networks} 
\titlerunning{Universal Finite-State and Self-Stabilizing Computation in ADNs} 
\author{Giuseppe A. Di Luna\footnote{Both authors contributed equally to this research.}}{DIAG, Sapienza University of Rome, Italy}{diluna@diag.uniroma1.it}{}{}
\author{Giovanni Viglietta\footnotemark[\value{footnote}]} {Department of Computer Science and Engineering, University of Aizu, Japan}{viglietta@gmail.com}{}{}
\authorrunning{Giuseppe A. Di Luna and Giovanni Viglietta}
\keywords{anonymous dynamic network, history tree, self-stabilization, finite-state stabilization} 
\begin{document}

\maketitle

\begin{abstract}
A communication network is said to be \emph{anonymous} if its agents are indistinguishable from each other; it is \emph{dynamic} if its communication links may appear or disappear unpredictably over time. Assuming that each of the $n$ agents of an anonymous dynamic network is initially given an input, it takes $2\tau n$ communication rounds for the agents to compute an arbitrary (frequency-based) function of such inputs (Di Luna--Viglietta, DISC~2023), where $\tau$ is a parameter called \emph{dynamic disconnectivity}, and measures how far the network is from being always connected (for always connected dynamic networks, $\tau=1$).

It is known that, without making additional assumptions on the network and without knowing the number of agents $n$, it is impossible to compute most functions and explicitly \emph{terminate}. In fact, current state-of-the-art algorithms only achieve \emph{stabilization}, i.e., allow each agent to return an output after every communication round. Outputs can be changed, and are guaranteed to be all correct after $2\tau n$ rounds. Such algorithms rely on the incremental construction of a data structure called a \emph{history tree}, which is augmented at every round. Thus, they end up consuming an unlimited amount of memory, and are also prone to errors in case of memory loss or corruption.

In this paper, we provide a general \emph{self-stabilizing} algorithm for anonymous dynamic networks that stabilizes in $\max\{4\tau n-2\mu,2\mu\}$ rounds (where $\mu$ measures the smallest amount of potentially corrupted history initially stored by any agent), as well as a general \emph{finite-state} algorithm that stabilizes in $\tau(2n^2+n)$ rounds.

Our work improves upon previously known methods that only apply to eventually static networks (Boldi--Vigna, Dist.\ Comp.~2002). In addition, we develop new fundamental techniques and operations involving history trees, which are of independent interest.
\end{abstract}

\section{Introduction}\label{s:1}
\mypar{Dynamic networks.} Technologies such as wireless sensor networks, software-defined networks, and ad-hoc networks of smart devices have made network topologies with frequent and ongoing changes increasingly common. In the algorithmic study of such \emph{dynamic networks}, a standard model assumes a set of agents that communicate synchronously by broadcast, where the topology is always connected but the communication links may change unpredictably from round to round.

\myskip
\mypar{Anonymous networks.} A classic assumption is that each agent has a unique ID, and efficient algorithms are known for solving many problems in this context~\cite{CFMS15,KLO11,KLO10,KOM11,MS18,DW05}. An alternative model is where agents are \emph{anonymous}, and are therefore initially indistinguishable. The study of anonymous networks is both practically and theoretically significant, with prolific research spanning several decades~\cite{BV01,CDS06,CGM08,JMM12,FPP00,SUW15,YK88}.

\myskip
\mypar{Computing in anonymous dynamic networks.} Recent work introduced the \emph{history tree} as a tool for studying anonymous dynamic networks that are connected at every round and have a unique leader~\cite{DV22}. This structure can be constructed via a distributed algorithm using the local information of agents; unlike similar structures, it has a temporal dimension that completely captures a network's dynamism. The theory of history trees was later extended to networks with no leader and networks with multiple leaders, as well as networks that are not necessarily connected at every round~\cite{DVdisc} and congested networks~\cite{DVcong}.

The introduction of history trees enabled the design of optimal linear-time \emph{universal algorithms} for all the aforementioned network scenarios (in a sense that will be elucidated in \cref{s:2}). However, in the leaderless setting considered in this paper, universal computation with \emph{termination} is impossible without additional knowledge about the network (such as the number of its agents). Without such knowledge, universal algorithms for leaderless networks can only \emph{stabilize} on the correct output without explicitly terminating.

\myskip
\mypar{Self-stabilization.} In real systems, temporary faults such as memory corruption or message losses are unavoidable; algorithms that withstand such faults are called \emph{self-stabilizing}. In \emph{static} anonymous networks, universal self-stabilizing algorithms exist~\cite{BV02b}. However, to the best of our knowledge, a similar result for dynamic networks has never appeared in the literature. It is therefore desirable to adapt the techniques in~\cite{DV22,DVdisc} to achieve self-stabilization.\footnote{Note that terminating algorithms cannot be self-stabilizing, because an adversary could corrupt the initial memory states of all agents, causing them to erroneously terminate with an arbitrary output.}

\myskip
\mypar{Finite-state computation.} We remark that the stabilizing algorithms from previous work on anonymous dynamic networks use unlimited local memory.\footnote{The terminating algorithms discussed in~\cite{DV22,DVdisc} have bounded-memory implementations, and so they are automatically finite-state.} For instance, if history trees are used, all agents are required to continually expand this data structure in their internal memory. It would be beneficial to develop a version of the stabilizing algorithms in~\cite{DV22,DVdisc} that uses finite memory, polynomial in the network size and relevant connectivity parameters, which is crucial for implementation in real systems with limited memory.

\subsection*{Contributions and technical challenges}
In this paper we present two main results that address the weaknesses discussed above. 
Preliminary versions of these results were announced at SIROCCO~2024~\cite{sirocco24} and appeared in the proceedings of OPODIS~2024~\cite{DVopodis}. 
The present version includes several improvements: an expanded section on related work, complete proofs for all statements, revised and reorganized text throughout, pseudocode for all algorithms, extensions of all results to disconnected networks, and a better running time for the algorithm in \cref{s:5}. All running times are expressed in terms of the number of agents in the network $n$, as well as the \emph{dynamic disconnectivity} $\tau$, which is a parameter akin to the dynamic diameter. This parameter is defined in \cref{s:2} and was introduced in \cite{DVdisc,sirocco24}.

The first result, presented in \cref{s:4}, is a self-stabilizing version of the stabilizing algorithm proposed in~\cite{DVdisc}. While the original algorithm stabilizes in $2\tau n$ rounds in a network of $n$ agents, our self-stabilizing version takes $\max\{4\tau n-2\mu,2\mu\}$ rounds (where $\mu$ measures the smallest amount of potentially corrupted history initially stored by any agent). This algorithm relies on a novel operation on history trees called \emph{chop}, which deletes old information and eventually removes corrupted data from memory. The chop operation is also used to merge history trees of different heights. We believe the dependence on $\mu$ to be unavoidable in dynamic networks.

The second result, presented in \cref{s:5}, is a finite-state adaptation of the stabilizing algorithm from~\cite{DVdisc}, resulting in an algorithm that uses memory polynomial in $n$ and $\tau$ and stabilizes in $\tau (2n^2+n)$ rounds. The idea of this algorithm is to avoid updating an agent's state for a round if it receives no new relevant information. This seemingly simple strategy has to be carefully implemented to guarantee the algorithm's correctness. In fact, not updating an agent's state causes a ``desynchronization'' of the system that has to be dealt with properly. This leads to the formulation of a generalized theory of history trees, which is of independent interest and may have important applications to asynchronous networks.

Neither of the above algorithms requires any prior knowledge of $n$ or $\tau$. However, \cref{s:3} offers an extra algorithm that is both self-stabilizing and finite-state, and stabilizes in $\tau(2n-2)$ rounds, which is optimal. The downside is that it assumes $n$ and $\tau$ to be known.

\myskip
\mypar{Overview of the algorithms.} All our algorithms let each agent maintain a local representation of the network's recent communication history, which agents exchange and merge at every round. Old information is periodically discarded while the recent history needed to compute the output is preserved. When $n$ and $\tau$ are known, agents retain a fixed window of sufficient length; when they are unknown, self-stabilization is achieved by gradually eliminating arbitrary initial information as a correct recent history grows. Our finite-state algorithm instead prevents local histories from growing indefinitely by suspending updates once agents have obtained mutually consistent information sufficient to compute the output, and resuming them whenever a disagreement is detected.

\subsection*{Related Work}
The study of computation in anonymous dynamic networks has primarily focused on two key problems: the \emph{Counting problem}, i.e., determining the number of agents $n$ in a network with a leader~\cite{DB15,DBBC14b,DBCB13,KM18,KM19,KM20,KM21,KM22}, and the \emph{Average Consensus problem}, i.e., computing the weighted average of the agents' inputs in a leaderless network~\cite{BT89,CL18,CL22,C11,KM21,NOOT09,O17,OT11,T84,YSSBG13}. Several lines of research have led to the development of algorithms addressing these issues, utilizing a combination of local averaging or mass-distribution methods with advanced termination strategies.

A completely different technique involving history trees was recently employed to develop optimal universal algorithms that stabilize within $2n$ rounds in networks with a leader~\cite{DV22}. This method has also been successfully adapted to multi-leader, leaderless, and disconnected networks~\cite{DVdisc}. These works provided tight linear-time stabilizing algorithms for all computable functions in anonymous dynamic networks.

To the best of our knowledge, none of the aforementioned solutions is tolerant to memory corruption. The most relevant work regarding self-stabilizing computation in anonymous networks is~\cite{BV02b}, which proposed a universal protocol for self-stabilizing computation limited to static (or eventually static) networks. This protocol stabilizes in $n+d$ rounds, where $d$ is the diameter of the network. However, this technique cannot be used in dynamic networks, as it relies on the network's static nature both to eliminate corrupted information quickly and to define the agents' state-update rule. Technically, in~\cite{BV02b}, an agent receives information from its neighbors and compares it with its own, keeping only the common portion up to the first discrepancy; it is shown that this rule eliminates corrupted information from the network within $d$ rounds. In a highly dynamic network, however, this strategy may cause agents to frequently reset their memory without stabilizing, even in the absence of memory corruption. In fact, we believe that dynamic networks do not allow the elimination of corrupted data in a time independent of the amount of such corrupted data.

Regarding finite-state computation, we focus on stabilizing or converging algorithms, since the terminating algorithms in~\cite{DV22,DVdisc} already have bounded-memory implementations. The current state of the art for stabilization speed achieves linear time but uses unlimited memory~\cite{DVdisc}. A majority of works on leaderless networks have focused on the Average Consensus problem~\cite{BT89,CL18,CL22,C11,KM21,NOOT09,O17,OT11,T84,YSSBG13}; most of these algorithms use real numbers of arbitrary precision, and therefore require unlimited memory, and converge instead of stabilizing. An exception is~\cite{NOOT09}, which proposes a quantized algorithm that, in our network model, stabilizes in $O\left(n^{3} \log (nQ)\right)$ rounds, where $Q$ is the number of levels used in the quantization. Of course, the use of quantization introduces an error in the average calculation that depends on $Q$.\footnote{This error could likely be eliminated by making $Q$ dependent on $n$, but this would require prior knowledge of $n$. However, in this case, the algorithm in~\cite{DVdisc} already terminates in $2n$ rounds using memory polynomial in $n$.}
We remark that our finite-state algorithm in \cref{s:5} not only has a better stabilization time but is also exact and does not require any prior knowledge of the network.

\section{Model Definition and Basic Structures}\label{s:2}
\mypar{Model of computation.} A \emph{dynamic network} is defined as an infinite sequence $\mathcal G=(G_t)_{t\geq 1}$, where $G_t=(V,E_t)$ is an undirected multigraph, $V=\{p_1, p_2, \dots, p_n\}$ is a system of $n$ \emph{anonymous agents},\footnote{The network is leaderless: no agent is supplied with an identifier or designated as a leader.} and $E_t$ is a multiset of edges representing \emph{links} between agents.\footnote{The parallel links in multigraphs can serve as a model for the multi-path propagation of radio waves. Of course, all the results in this paper also apply to networks modeled as simple graphs.} We assume that link multiplicities are polynomially bounded in $n$.\footnote{This assumption is used only to express the memory bounds in terms of $n$ and $\tau$, but our algorithms remain correct for arbitrary finite link multiplicities. If $M$ is a uniform upper bound on these multiplicities, the memory bounds acquire a logarithmic dependence on $M$.}

Each agent $p$ has an immutable input $\lambda(p)$, chosen from a fixed finite alphabet and available throughout the execution. Agents also have internal states, which are initially determined by their inputs (thus, agents with the same input start in the same state). We refer to the initial configuration as \emph{round~$0$}. Then, at every \emph{round~$t\geq 1$}, each agent sends a message to its neighbors in $G_t$ through all its incident links. Every agent creates its message as a function of its current state and sends the same message to all its current neighbors. During round~$t$, each agent reads all messages coming from its neighbors (in no particular order). Based on the information therein, the agent updates its internal state according to a local algorithm, the same for all agents. An agent may also return an \emph{output} at the end of a round. After that, the next round starts.

Note that our network model is \emph{synchronous}, i.e., all agents are assumed to send and receive their messages simultaneously, and a new round starts at the same time for all agents.

\myskip
\mypar{Universal computation.} If the outputs of all agents remain unchanged starting at round~$t$, the system is said to \emph{stabilize} in $t$ rounds. If executing a certain algorithm causes a system to stabilize regardless of the inputs assigned to its $n$ agents, we view such an algorithm as computing a function from an $n$-tuple of inputs to an $n$-tuple of outputs.

Not all functions can be computed by a system of anonymous agents. It is known that, without making extra assumptions on the network, the agents can essentially only compute the \emph{Input Frequency function}, which requires every agent to output the set of pairs $\{(z,f_\lambda(z))\mid f_\lambda(z)>0\}$, where $f_\lambda(z)=|\{p\in V\mid\lambda(p)=z\}|/n$. For example, if five agents have inputs $a,a,a,b,c$, every agent must output $\{(a,3/5),(b,1/5),(c,1/5)\}$~\cite{DVdisc}.\footnote{The Input Frequency function does not require agents to determine $n$. In particular, if every agent has input $z$, the correct output is $\{(z,100\%)\}$ regardless of the topology or network size; indeed, without a leader, rings of different sizes with identical inputs are indistinguishable to their agents, so exact counting is impossible in general.}

Once this fundamental function has been computed, the system can then immediately compute any \emph{frequency-based} function, i.e., any function that depends only on input percentages, such as the weighted average of the inputs, etc.\footnote{If the network has a unique leader or the number of agents is known, computing the Input Frequency function also allows the system to determine the exact number of agents that were assigned each input.} For this reason, any algorithm that allows a system to compute the Input Frequency function for any multiset of inputs assigned to the agents is said to be \emph{universal}.

\begin{figure}
\centering
\includegraphics[scale=0.5]{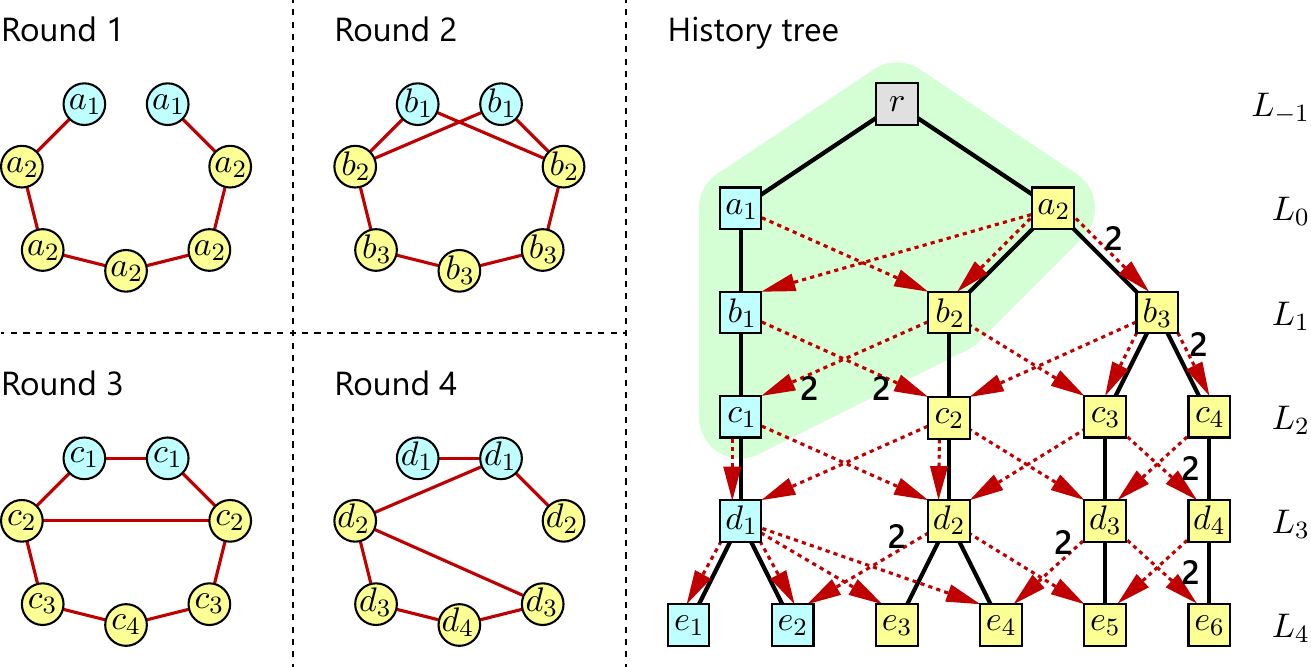}
\caption{The first communication rounds of an anonymous dynamic network of $n=7$ agents and the corresponding levels of its history tree. Initial inputs are represented by agents' colors (cyan or yellow). Labels on agents and nodes are added solely for the reader's convenience to indicate classes of indistinguishable agents; they are not available to the agents. The portion of the history tree with a green background is the vista of the two agents labeled $c_1$ after two communication rounds. Level $L_2$ is a counting level, which yields the equations $|c_1|=|c_2|=|c_3|=2|c_4|$. These equations determine relative anonymities and hence the input frequencies. If $|c_1|=2$ were known from additional information not assumed here, one could also infer that $n=7$.}
\label{fig:ht1}
\end{figure}

\myskip
\mypar{History trees.} \emph{History trees} were introduced in~\cite{DV22} as a tool of investigation for anonymous dynamic networks; an example is in \cref{fig:ht1}. A history tree is a representation of a dynamic network given some inputs to its agents. It is an infinite graph whose nodes are partitioned into \emph{levels} $L_t$, with $t\geq -1$; each node in $L_t$ represents a class of agents that are \emph{indistinguishable} at the end of round~$t$ (with the exception of $L_{-1}$, which contains a single root node $r$ representing all agents). The definition of distinguishability is inductive: at the end of round~$0$, two agents are distinguishable if and only if they have different inputs. At the end of round~$t\geq 1$, two agents are distinguishable if and only if they were already distinguishable at round~$t-1$ or if they have received different multisets of messages during round~$t$. (We refer to ``multisets'' of messages, as opposed to sets, because multiple copies of identical messages may be received; thus, each message has a \emph{multiplicity}.) The \emph{anonymity} of a node $v$ in a history tree is the number of agents represented by $v$, denoted by ${\bf a}(v)$.

Every node (other than the root $r$) has a label indicating the input of the agents it represents. There are also two types of edges connecting nodes in adjacent levels. The \emph{black edges} induce an infinite tree spanning all nodes, rooted at node $r\in L_{-1}$. The presence of a black edge $(v, v')$, with $v\in L_{t}$ and $v'\in L_{t+1}$, indicates that the \emph{child node} $v'$ represents a subset of the agents represented by the \emph{parent node} $v$. The \emph{red multi-edges} represent communications between agents. The presence of a red edge $(v, v')$ with multiplicity $m$, with $v\in L_{t}$ and $v'\in L_{t+1}$, indicates that, at round~$t+1$, each agent represented by $v'$ receives $m$ (identical) messages from agents represented by $v$.

The \emph{vista} $\mathcal V$ of an agent $p$ at round~$t\geq0$ is the subgraph of the history tree that is spanned by all the shortest paths (using black and red edges indifferently) from the root $r$ to the node in $L_t$ representing $p$ (\cref{fig:ht1} shows an example of a vista).\footnote{In previous literature on history trees, vistas were called ``views''~\cite{DV22,DVdisc,sirocco24}. In this paper we changed our terminology to avoid confusion with the loosely related concept of ``views'' introduced by Yamashita and Kameda in the study of static networks~\cite{YK88}.} The node in $\mathcal V$ representing $p$ at round $t$ is the \emph{bottom node} of the vista; this is the node of $\mathcal V$ farthest from $r$.

\begin{figure}
\centering
\includegraphics[scale=0.5]{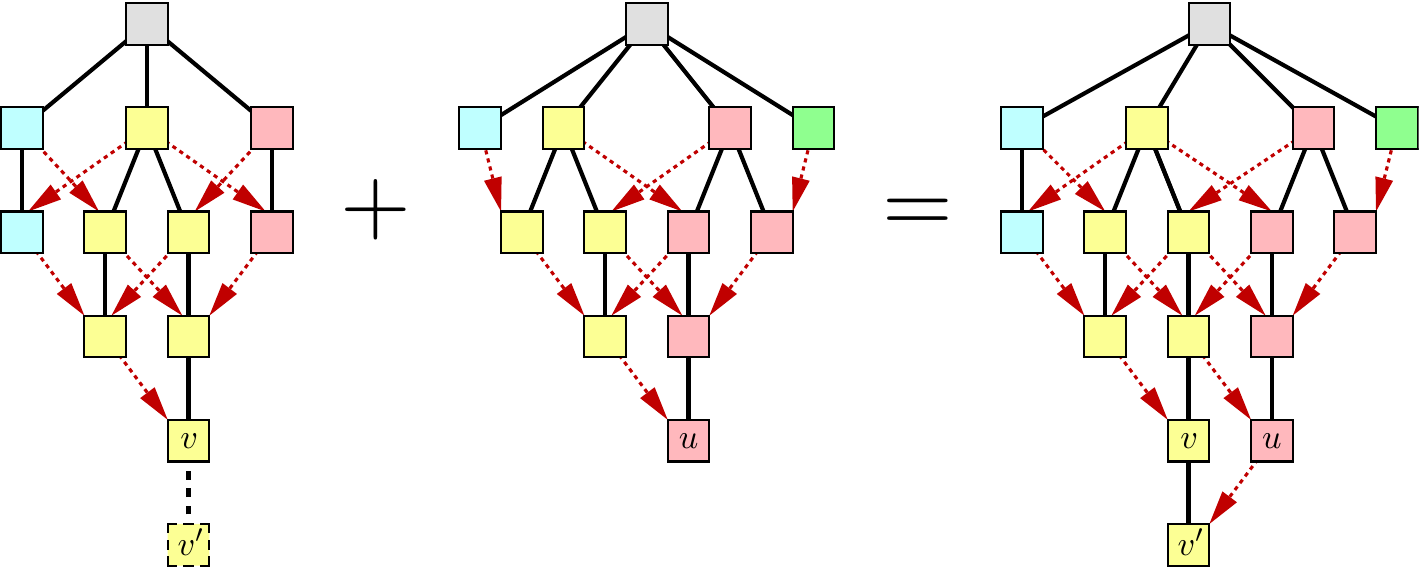}
\caption{Updating the vista of an agent represented by node $v$ after it receives the vista of an agent represented by $u$ as a message. The two vistas are match-and-merged starting from the roots; $v$ also gets a child $v'$, and a new red edge from $u$ to $v'$ is added, representing this interaction.}
\label{fig:merge}
\end{figure}

\myskip
\mypar{Constructing vistas.} It was shown in~\cite{DV22} that each agent can construct its vista of the history tree in real time. To achieve this, each agent is required to send its current vista to all its neighbors at every round, and then match-and-merge it with all the vistas it receives from them. An example of this operation is shown in \cref{fig:merge}. At every round, the bottom node $v$ of the vista also receives a new child $v'$, which becomes the new bottom node. After merging an incoming vista, a red edge is added from the bottom node of this vista to $v'$. The multiplicities of such red edges match the number of isomorphic vistas that are being merged.

\myskip
\mypar{Computing with history trees.} Using techniques introduced in~\cite{DV22}, if the network is connected at every round, then the vista of any agent at round $2n-2$ contains enough information to compute the Input Frequency function. This is done by finding a \emph{counting level}, which is a level in the history tree where every node has exactly one child; level $L_2$ in \cref{fig:ht1} is an example.

For any red edge with multiplicity $m_1>0$ directed from a node $v$ in a counting level $L_t$ to the child of another node $u\in L_t$, there must be a red edge with multiplicity $m_2>0$ from $u$ to the child of $v$. Then, we have $m_1{\bf a}(u)=m_2{\bf a}(v)$. By writing the system of such linear equations for the entire level $L_t$ and solving it up to a common factor, we can compute the Input Frequency function without determining $n$. Note that the first counting level $L_t$ occurs when $t\leq n-2$, and it takes at most $t+n$ rounds for all the nodes in $L_t$ and their children to appear in the vistas of all agents. Thus, this algorithm stabilizes in at most $2n-2$ rounds.

\myskip
\mypar{Self-stabilization.} An algorithm computes a function in a \emph{self-stabilizing} manner if it does so regardless of the initial states of the agents. That is, even if the states of the agents consist of incorrect history trees, a self-stabilizing algorithm is able to eventually erase all errors and stabilize on the correct output. Clearly, the above algorithm is not self-stabilizing.

\myskip
\mypar{Finite-state computation.} An algorithm is \emph{finite-state} if, for every fixed $n$ and $\tau$, the state of each agent after every round can be encoded by at most $f(n,\tau)$ bits on all $\tau$-union-connected networks of size $n$; the parameters need not be known to the agents. In self-stabilizing executions, the arbitrary initial states at round~$0$ are excluded from this bound, which applies after the first state update. Since the above algorithm constructs ever-growing vistas, it is not finite-state.

\myskip
\mypar{Disconnected networks.}
The algorithm described above assumes that the network is connected at every round, that is, the multigraph $G_t$ is connected for all $t \geq 1$. The notion of counting level was later extended to disconnected networks in~\cite{DVdisc}. It is assumed that the network is \emph{$\tau$-union-connected}, meaning that there exists a constant $\tau$, called the \emph{dynamic disconnectivity}, such that the union of any $\tau$ consecutive $G_t$'s is connected (the union of multigraphs is obtained by adding their adjacency matrices).\footnote{The dynamic disconnectivity is related to the usual notion of dynamic diameter $d$~\cite{KLO10} by the tight inequalities $\tau \leq d \leq \tau(n-1)$.} Under this assumption, the Input Frequency function can be computed with stabilization within $\tau (2n-2)$ rounds.

\section{Universal Computation in Networks of Known Size}\label{s:3}
We first assume that the number of agents $n$ in an anonymous dynamic network is known. Using this knowledge, we develop a relatively straightforward algorithm for connected networks that is both self-stabilizing and finite-state and has an optimal stabilization time of $2n-2$ rounds. We then generalize this algorithm to $\tau$-union-connected networks, where the stabilization time becomes $\tau(2n-2)$, provided that both $n$ and $\tau$ are known.

\myskip
\mypar{Chop operation.} The algorithm is based on an operation on history tree vistas called \emph{chop}, illustrated in \cref{fig:self}. Given a vista, eliminate the nodes in level $L_0$ (as well as their incident black and red edges) and connect the root to all nodes in level $L_1$ via black edges. As a result, all non-root nodes are shifted by one level: those that were in $L_1$ are now in $L_0$, etc.

After that, we have to ensure that the resulting vista is well-formed. Note that every node $v$ in a vista $\mathcal V$ is the bottom node of a maximal vista $\mathcal V_v\subseteq \mathcal V$, called the \emph{sub-vista} of $\mathcal V$ at $v$. To complete the chop operation, we repeatedly merge the nodes whose sub-vistas are isomorphic. This is done level by level, starting from the nodes in $L_0$, then the nodes in $L_1$, etc. As a consequence, when two nodes $u,v\in L_t$ are found whose sub-vistas $\mathcal V_u$ and $\mathcal V_v$ are isomorphic, they must be siblings. This is because the parents of $u$ and $v$ have isomorphic sub-vistas as well, and therefore they have already been merged as nodes of $L_{t-1}$.

Merging $u$ and $v$ is done as follows. First we delete one of them, say $u$. The children of $u$ now become children of $v$. The red edges inbound to $u$ are simply eliminated, while the red edges outbound from $u$ are now redirected as outbound from $v$. Specifically, for every $w\in L_{t+1}$, let $m$ and $m'$ be the multiplicities of the red edges $(v,w)$ and $(u,w)$, respectively (taking either multiplicity to be~$0$ if the corresponding edge is absent); after merging $u$ and $v$, the multiplicity of the red edge $(v,w)$ is $m+m'$ (with multiplicity~$0$ meaning that the edge is absent).

This is justified by the fact that the agents represented by $u$ and $v$ have received identical multisets of messages and are therefore indistinguishable; hence the need to merge the two nodes. Thus, all messages received from agents represented by $u$ and $v$ are now considered identical, and the multiplicities of the corresponding red edges must be added together.

The significance of the chop operation is given by the following result. Essentially, chopping a vista is equivalent to ``forgetting'' the first communication round of the network that originated that vista.

\begin{figure}
\centering
\includegraphics[scale=0.5]{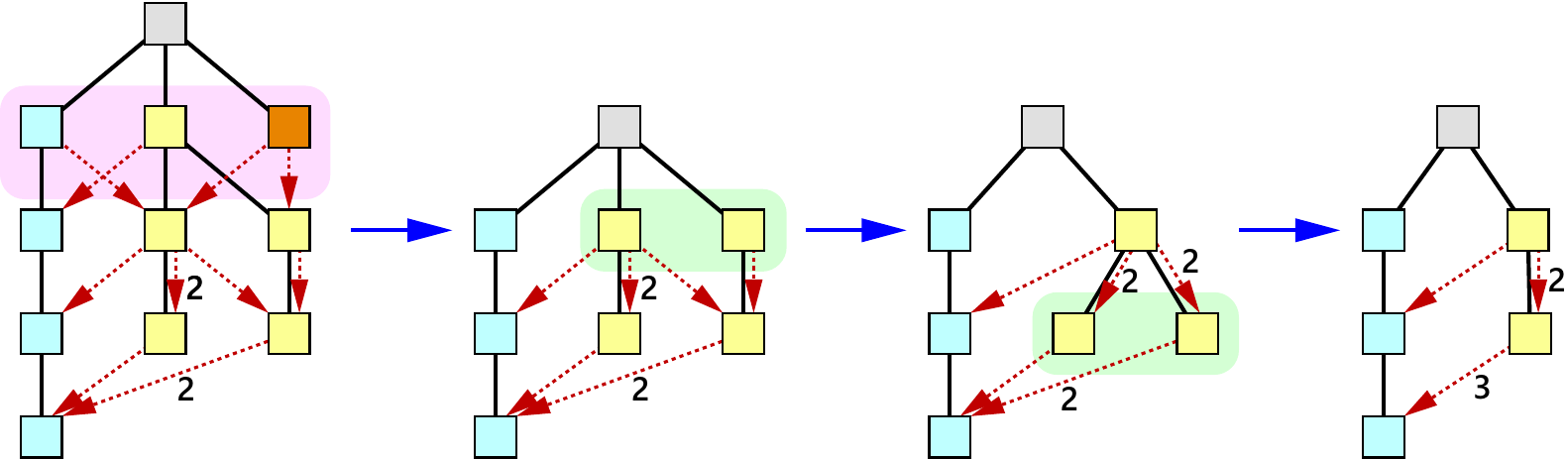}
\caption{The chop operation on a history tree vista: eliminate level $L_0$, and then repeatedly merge nodes whose corresponding sub-vistas are isomorphic, also combining their outgoing red edges.}
\label{fig:self}
\end{figure}

\begin{lemma}\label{l:chop}
Let $\mathcal G=(G_1,G_2, G_3, \dots)$ be a network, and let $\mathcal V$ be the vista of an agent $p$ at round~$t\geq 1$. Then, the vista $\mathcal V'$ obtained by chopping $\mathcal V$ is isomorphic to the vista of $p$ at round~$t-1$ in the network $\mathcal G'=(G_2,G_3, G_4, \dots)$, where agents' inputs are as in $\mathcal G$.
\end{lemma}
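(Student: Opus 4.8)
The plan is to prove the following stronger, ``global'' statement and then read off the claim about views: the chop operation, applied to the entire history tree, sends $\mathcal H(\mathcal G)$ to $\mathcal H(\mathcal G')$ with all levels re-indexed by one, and a view is recovered from a history tree simply as the subgraph on the nodes from which the relevant bottom node is reachable. To handle this cleanly I would first get rid of the algorithm-dependent talk of ``messages'' and work with the combinatorial indistinguishability relations directly: write $p\equiv_0 q$ iff $p,q$ have equal input, and for $t\geq 1$ write $p\equiv_t q$ iff $p\equiv_{t-1}q$ and $p,q$ see the same multiset (with multiplicities) of $\equiv_{t-1}$-classes among their $G_t$-neighbors. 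Then the nodes of $L_t$ are exactly the $\equiv_t$-classes, and $p\equiv_t q$ iff the round-$t$ views of $p$ and $q$ are isomorphic (both facts from~\cite{DV22}).

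Next I would introduce ``skip-the-first-round'' relations $\equiv^{(1)}_t$ on the agents of $\mathcal G$: $p\equiv^{(1)}_1 q$ iff $p,q$ have equal input, and for $t\geq 2$, $p\equiv^{(1)}_t q$ iff $p\equiv^{(1)}_{t-1}q$ and $p,q$ see the same multiset of $\equiv^{(1)}_{t-1}$-classes among their $G_t$-neighbors. Unwinding the definitions shows at once that $\equiv^{(1)}_{t}$ on $\mathcal G$ coincides with $\equiv_{t-1}$ on $\mathcal G'$ (both use exactly the graphs $G_2,\dots,G_t$ and the common inputs). Hence the target object, the view of $p$ at round $t-1$ in $\mathcal G'$, is precisely the view built from $\equiv^{(1)}_1,\dots,\equiv^{(1)}_{t}$. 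So it is enough to show that chopping the round-$t$ view of $\mathcal G$ — which encodes $\equiv_0,\dots,\equiv_t$ — yields exactly the view encoding $\equiv^{(1)}_1,\dots,\equiv^{(1)}_t$.

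The heart of the argument is an induction running in parallel with the level-by-level re-merging phase of chop. After deleting $L_0$ and reconnecting the root, the new level $L_0$ is the old $L_1$, whose nodes are now root children carrying only their input labels; merging isomorphic sub-views collapses them to one node per input value, i.e.\ to the $\equiv^{(1)}_1$-classes. Inductively, suppose the new level $L_{s-1}$ has already been shown to realize $\equiv^{(1)}_{s}$. Then the sub-view of a new level-$s$ node records exactly (i) the $\equiv^{(1)}_s$-class of its black-parent and (ii) for each $\equiv^{(1)}_s$-class, the multiplicity of the red edge entering it, which by the red-edge-addition rule of the chop's merge step equals the number of $G_{s+1}$-neighbors of that node lying in that class; this is precisely the data that separates $\equiv^{(1)}_{s+1}$-classes, so re-merging at level $s$ produces the $\equiv^{(1)}_{s+1}$-classes. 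Iterating over all levels gives the claimed view. It remains to push this from history trees down to views: since every root-to-node path in a history tree has length one more than the node's level, a view is exactly the subgraph on nodes from which the bottom node is reachable, and deletion of $L_0$, reconnection to the root, and merging all preserve this reachability relation among levels $\geq 1$; this turns the isomorphism of the relevant substructures into the isomorphism of the two views.

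The step I expect to be the main obstacle is the inductive claim in the previous paragraph: one must argue that the re-merging neither over-merges nor under-merges, which rests on the fact (asserted in the excerpt) that two nodes with isomorphic sub-views at a given level must be siblings, and on the claim that after merging the red-edge multiplicities faithfully encode the neighbor-class multiset. A secondary subtlety is that the lemma only talks about the bottom agent $p$, while this argument most naturally proves the whole-history-tree statement and then restricts; justifying that the restriction to nodes co-reachable to $p$'s node commutes with chop is exactly where the ``all root-to-node paths have equal length'' observation is needed. The edge cases — $t=1$, and views in which several $L_0$-nodes collapse during re-merging — are special instances of the same induction and need no separate treatment.
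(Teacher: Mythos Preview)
Your approach is correct and takes a genuinely different route from the paper's proof.

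The paper does not go through the global history tree at all. Instead it strengthens the statement in a different direction: it introduces \emph{view compounds} (the match-and-merge of an arbitrary multiset of equal-height views, representing a set $S$ of agents) and proves by induction on $t$ that chopping a view compound for $S$ in $\mathcal G$ yields the view compound for $S$ at round $t-1$ in $\mathcal G'$. The inductive step looks at each deepest node $v$, takes the compound formed by the sub-views of the sources of its inbound red edges, applies the hypothesis to that smaller compound, and then checks that the multiplicity-addition rule of the merge reproduces the correct red edges. Your approach instead reformulates everything semantically via the relations $\equiv_t$ and the shifted relations $\equiv^{(1)}_t$, proves the statement for the whole history tree, and then restricts to a single agent's view.

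What each buys: the paper's view-compound induction never leaves the object actually being chopped, so no separate ``restriction'' argument is needed; the price is the auxiliary notion of compound. Your route is more conceptual --- it makes transparent that chop is exactly ``forget round~1'' at the level of equivalence classes --- and it proves the stronger global statement for free; the price is the passage from $\mathcal H$ to $\mathcal V_p$ at the end. On that last step, your one-line justification (``merging preserves reachability among levels $\geq 1$'') is not quite enough as stated: a path between merged classes in $\mathsf{chop}(\mathcal H)$ is witnessed by edges between possibly different representatives, so a priori it need not lift to a path in $\mathcal H$. The fix is short but worth making explicit: if $[z_i]\to[z_{i+1}]$ is an edge and $z_{i+1}''\in[z_{i+1}]$, then because $z_{i+1}'$ and $z_{i+1}''$ have isomorphic sub-views there exists $z_i''\in[z_i]$ with $z_i''\to z_{i+1}''$; back-tracking from $v_p$ then lifts any path in $\mathsf{chop}(\mathcal H)$ to one in $\mathcal H$ ending at $v_p$, which gives surjectivity of the natural map $\mathsf{chop}(\mathcal V_p)\to\mathcal W$. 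The red-edge multiplicities match because every $x'$ with $m(x',y)>0$ and $y\in\mathcal V_p$ automatically lies in $\mathcal V_p$. With that paragraph added, your argument is complete.
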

\begin{proof}
We define a \emph{compound vista} as the result of the match-and-merge operation on a multiset of vistas of equal height. We will prove that our claim holds not only for the vista of an agent, but more generally for the compound vista of any set of agents in the same network. The concept of sub-vista of a compound vista and the chop operation on a compound vista are defined exactly as for single vistas.

So, our generalized claim is as follows. Let $\mathcal G=(G_1,G_2, G_3, \dots)$ be a network, and let $\widetilde{\mathcal V}$ be the compound vista consisting of the vistas at round~$t\geq 1$ of the agents in a subset $S$ of a system of agents. Then, the compound vista $\widetilde{\mathcal V}'$ obtained by chopping $\widetilde{\mathcal V}$ is isomorphic to the compound vista consisting of the vistas of the agents in $S$ at round~$t-1$ in the network $\mathcal G'=(G_2,G_3, G_4, \dots)$, where agents' inputs are as in $\mathcal G$.

Our proof is by induction on $t$. For $t=1$ the proof is straightforward, as the deepest level of the compound vista $\widetilde{\mathcal V}$ is $L_1$. When $\widetilde{\mathcal V}$ is chopped, its level $L_0$ is removed, all nodes in $L_1$ get linked to the root becoming level $L_0$, and all nodes with the same input are merged (because their sub-vistas are isomorphic). Thus, the resulting compound vista $\widetilde{\mathcal V}'$ accounts for all the agents in $S$ at round~$0$ in $\mathcal G$, i.e., it has one node in level $L_0$ for each distinct input value occurring among the agents in $S$. However, since the agents of $\mathcal G$ and the agents of $\mathcal G'$ have the same inputs, $\widetilde{\mathcal V}'$ also represents $S$ at round~$0$ in $\mathcal G'$.

Now assume that our claim holds at round~$t-1\geq 1$, and let us prove that it holds at round~$t$. We define $\widetilde{\mathcal V}''$ as the compound vista representing the agents in $S$ at round~$t-1$ in the network $\mathcal G'$. Our goal is to prove that $\widetilde{\mathcal V}'$ and $\widetilde{\mathcal V}''$ are isomorphic.

Consider any node $v$ in level $L_t$ of $\widetilde{\mathcal V}$, representing a set of agents $S_v\subseteq S$. Let the $k$ red edges inbound to $v$ have multiplicities $m_1$, $m_2$, \dots, $m_k$, and let these red edges come from nodes $u_1$, $u_2$, \dots, $u_k$ in level $L_{t-1}$ of $\widetilde{\mathcal V}$, whose sub-vistas in $\widetilde{\mathcal V}$ are $\mathcal U_1$, $\mathcal U_2$, \dots, $\mathcal U_k$.

The compound vista $\widetilde{\mathcal W}$ formed by match-and-merging $\mathcal U_1$, $\mathcal U_2$, \dots, $\mathcal U_k$ describes the agents in a set $S'$ at round~$t-1$: the agents in $S'$ are those that send messages to agents in $S_v$ at round~$t$ in $\mathcal G$. By the inductive hypothesis, chopping $\widetilde{\mathcal W}$ results in a compound vista $\widetilde{\mathcal W}'$ that describes the agents in $S'$ in the network $\mathcal G'$ at round~$t-2$ (by convention, we assume that at \emph{round~$-1$} all agents are indistinguishable, have no inputs, and are all represented by the root of the history tree). Thus, $\widetilde{\mathcal W}'$ is a subtree of $\widetilde{\mathcal V}''$. However, by definition of the chop operation, chopping $\widetilde{\mathcal W}$ (which is a subtree of $\widetilde{\mathcal V}$) also yields a subtree of $\widetilde{\mathcal V}'$ (which is obtained by chopping $\widetilde{\mathcal V}$). Thus, $\widetilde{\mathcal V}'$ and $\widetilde{\mathcal V}''$ both contain isomorphic copies of $\widetilde{\mathcal W}'$.

Chopping $\widetilde{\mathcal V}$ may cause some nodes among $u_1$, $u_2$, \dots, $u_k$, say $u_{i_1}$, $u_{i_2}$, \dots, $u_{i_j}$, to merge into a single node $u'$. Accordingly, the agents represented by $u_{i_1}$, $u_{i_2}$, \dots, $u_{i_j}$ have isomorphic vistas at round~$t-2$ in $\mathcal G'$. Thus, the agents in $S_v$ receive a total of $m=m_{i_1}+m_{i_2}+ \dots +m_{i_j}$ equal messages from the agents represented by $u_{i_1}$, $u_{i_2}$, \dots, $u_{i_j}$. Hence, $m$ is the multiplicity of the red edge in $\widetilde{\mathcal V}''$ connecting the node representing $u'$ with the node $v'$ representing $S_v$. By definition of the chop operation, $m$ is also the multiplicity of the new red edge $(u',v)$ that is created upon merging $u_{i_1}$, $u_{i_2}$, \dots, $u_{i_j}$ into $u'$.

Since the above is true of any red edge inbound to $v'$, we conclude that the sub-vista of $\widetilde{\mathcal V}'$ at $v$ is isomorphic to the sub-vista of $\widetilde{\mathcal V}''$ at $v'$. In turn, this holds for every node $v$ in level $L_t$ of $\widetilde{\mathcal V}$. Moreover, neither $\widetilde{\mathcal V}'$ nor $\widetilde{\mathcal V}''$ contains multiple nodes whose sub-vistas are isomorphic, by definition of the merge operation and by the chop operation, respectively. It follows that there is in fact a well-defined isomorphism between $\widetilde{\mathcal V}'$ and $\widetilde{\mathcal V}''$.
\end{proof}

In particular, \cref{l:chop} immediately implies that the order of chopping and match-and-merge does not matter: performing match-and-merge on the vistas of a system of agents according to the communication graph $G_t$ and then chopping the resulting vistas yields the same result as first chopping each vista and then applying match-and-merge according to $G_t$.

\begin{observation}\label{o:commute}
The operations of chop and match-and-merge commute.
\end{observation}

\mypar{Algorithm overview.} The pseudocode of our algorithm is found in \cref{l:1} of \cref{a:1}. For simplicity, let us assume that $\tau=1$; the straightforward extension to the general case with $\tau\geq 1$ will be discussed later. The goal of the algorithm is for each agent to construct a coherent vista of the history tree representing the $2n-2$ most recent rounds, which enables the stabilizing algorithm from~\cite{DVdisc} to correctly compute the Input Frequency function based on the first counting level in each vista (cf.~\cref{s:2}).

If the state of an agent $p$ has an encoding that exceeds a fixed polynomial bound in $n$ and $\tau$ (chosen large enough to encode every correct vista of the permitted height), or does not represent a well-formed vista of a history tree, or represents a vista with more than $2n-2$ levels (excluding levels $L_{-1}$ and $L_0$), its state is simply re-initialized to a vista containing a root and a single child labeled as the input of $p$. When $p$ receives the vistas of its neighbors, it determines the vista with smallest height (including its own) and trims all other vistas by repeatedly performing the chop operation on them, until all vistas have the same height. Then, the resulting vistas are match-and-merged into $p$'s vista as usual. After that, if the current vista of $p$ has $2n-1$ levels (excluding levels $L_{-1}$ and $L_0$), the oldest level is removed via a chop operation. Finally, if the resulting vista contains a counting level, this is used to output the frequencies of all inputs; otherwise, $p$ returns a default output consisting of its own input with $100\%$ frequency.

\begin{theorem}\label{t:1}
For every $n\geq 1$, there is a self-stabilizing, finite-state universal algorithm that operates in any connected anonymous dynamic network of known size $n$ and stabilizes in at most $2n-2$ rounds.
\end{theorem}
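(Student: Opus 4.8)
\medskip
\noindent\textbf{Proof plan.}
The plan is to verify the three properties separately: that the algorithm of \cref{l:1} is finite-state, that it stabilizes within $2n-2$ rounds, and that its stable output is the Input Frequency function. The finite-state property is the easiest: after round~$1$ the re-initialization test guarantees that every agent's state is a well-formed history-tree view of an $n$-agent network with at most $2n-2$ levels, and since $n$ is known and fixed such a view has at most $n$ nodes per level (the anonymities of same-level nodes are disjoint and sum to at most $n$) and red-edge multiplicities bounded by the number of messages a node can receive in one round; so it is encodable in $f(n)$ bits for a suitable $f$. For the depth, let $\mu_k$ be the minimum over all agents of the number of non-root levels of an agent's view after round~$k$, and let $D$ be the number of levels the algorithm keeps, which in the indexing of \cref{s:2} corresponds to the view at round~$2n-2$. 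If an agent, together with all its neighbours, has at least $\mu_{k-1}$ levels at round~$k-1$, then the trimming step chops all participating views to a common depth $\delta\ge\mu_{k-1}$, a new bottom node is appended, and a final chop caps the depth at $D$; applying this to the agent attaining the minimum at round~$k$ (and noting that the same lower bound holds for every agent) gives $\mu_k\ge\min(\mu_{k-1}+1,D)$, hence $\mu_k\ge\min(k,D)$. Thus every view reaches the full depth $D$ within $2n-2$ rounds.

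\medskip
\noindent The heart of the argument is correctness, which I would derive from the chop operation and \cref{l:chop}. Write $\mathcal G^{(s)}:=(G_s,G_{s+1},\dots)$ for the suffix network obtained by discarding the first $s-1$ rounds while keeping the original inputs; iterating \cref{l:chop}, performing $j$ consecutive chops on the round-$t$ view of an agent in $\mathcal G$ yields the round-$(t-j)$ view of that same agent in $\mathcal G^{(j+1)}$ — informally, each chop ``forgets'' the oldest recorded round. I would then prove, by induction on $k$, the following invariant: for every agent $p$, if $p$'s view after round~$k$ has $d$ levels, then chopping it $\max(0,d-k)$ times produces exactly the genuine view that $p$ would hold after the corresponding number of rounds in the appropriate suffix network; equivalently, only the levels of a view reaching back before round~$1$ can be corrupted, while all more recent levels are exactly correct with respect to $\mathcal G$. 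In the inductive step for round~$k$, the trimming step brings $p$'s view and the views of its current neighbours down to a common depth $\delta$; by \cref{l:chop} and the inductive hypothesis, as long as $\delta$ does not exceed $k-1$ each trimmed view is the genuine view of the respective agent in a common suffix network, so the match-and-merge — which is sound precisely on genuine views of equal height — yields the correct new view, the appended bottom node and its red edges are correct by construction, and the concluding too-tall chop merely discards an old, possibly corrupted, level. The invariant is therefore preserved; combined with the depth bound, it shows that for every $k\ge 2n-2$ the view of every agent has the full $D$ levels and all of them are recent, hence equals the genuine view of $p$ at round~$2n-2$ in $\mathcal G^{(k-2n+3)}$.

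\medskip
\noindent To finish, observe that every $G_t$ is connected, so every suffix network $\mathcal G^{(s)}$ is connected as well, and the computation recalled in \cref{s:2} applies to it: the genuine round-$(2n-2)$ view of any agent contains a usable counting level, and solving the associated system of linear equations (the free variable being pinned by the known value $n$) returns the Input Frequency function of the true inputs. Hence from round~$2n-2$ on every agent outputs this function, and since the output does not depend on $k$ the system has stabilized; universality follows. For $n=1$ the claim is trivial, as the lone agent returns its own input with frequency~$100\%$ from the outset.

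\medskip
\noindent I expect the inductive step of the correctness invariant to be the main obstacle: one has to cope simultaneously with the truncation of views of different heights before a merge, with the renumbering of levels caused by chopping, and with the residual corruption, and make all of this interact correctly with \cref{l:chop} and with the fact that match-and-merge is sound only on genuine, equal-height views. A related point that must be settled with care is the exact list of conditions enforced by the re-initialization test, so that (a) mutually incompatible corrupted views are rejected after a single round, which keeps whatever corruption survives ``self-consistent'', and (b) the finite-state bound actually holds.
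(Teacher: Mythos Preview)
Your plan is essentially the paper's proof, reorganized: both rest on \cref{l:chop} to argue that the corrupted top of every view can be ``commuted away'' while the bottom $\min(d,k)$ levels are exactly the genuine view in a suffix network, and both conclude by invoking the counting-level result from \cref{s:2}. The paper phrases this as ``chop and update commute, so after $2n-2$ alternations the result is as if all garbage were chopped first and then $2n-2$ honest updates performed''; your explicit round-by-round invariant is a more careful unpacking of the same idea, and your monotone depth bound $\mu_k\geq\min(k,D)$ replaces the paper's observation that heights equalize within $n-1$ rounds.

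One small gap to flag in your finite-state paragraph: the re-initialization test in \cref{l:1} checks only well-formedness and the number of levels, not the number of nodes per level, so your sentence ``such a view has at most $n$ nodes per level (the anonymities of same-level nodes are disjoint and sum to at most $n$)'' does not follow---an adversarial initial state can be a well-formed view of height $\leq 2n-2$ with arbitrarily many nodes. The paper is equally informal here (it says ``the algorithm can chop it until it is sufficiently small'', which also does not literally bound nodes per level); the clean fix, for both arguments, is to add an explicit size bound to the re-initialization test.
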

\begin{proof}
We will prove that the above algorithm satisfies the required conditions. Denote the communication network as $(G_1,G_2,G_3,\dots)$, and let $\ell_t=\min\{t,2n-2\}$. We prove by induction on $t\geq 0$ that, at the end of round~$t$, the last $\ell_t$ levels of the vistas of all agents correctly represent the most recent $\ell_t$ communication rounds of the network. More precisely, chopping each agent's vista after round~$t$ until it has $\ell_t$ levels yields vistas isomorphic to those of the network $(G_{t-\ell_t+1}, G_{t-\ell_t+2}, G_{t-\ell_t+3}, \dots)$ after $\ell_t$ rounds (i.e., the chopped vistas collectively represent the communication graphs $G_{t-\ell_t+1}$, $G_{t-\ell_t+2}$, \dots, $G_{t}$).

The property trivially holds at the end of round~$0$. Assume now that it holds at the end of round~$t$, and let us prove that it holds at the end of round~$t+1$. By the inductive hypothesis, an agent $p$ starts round~$t+1$ with a vista $\mathcal V_t$ such that chopping $\mathcal V_t$ until it has $\ell_t$ levels yields a vista $\mathcal V'_t$ representing the communication graphs $G_{t-\ell_t+1}$, $G_{t-\ell_t+2}$, \dots, $G_{t}$. During round $t+1$, the agent $p$ receives the vistas of its neighbors in $G_{t+1}$. Again, by the inductive hypothesis, the last $\ell_t$ levels of the received vistas correctly represent the most recent $\ell_t$ communication rounds: $G_{t-\ell_t+1}$, $G_{t-\ell_t+2}$, \dots, $G_{t}$.

According to our algorithm, $\mathcal V_t$ and the incoming vistas are chopped until their heights are equal. Then, $\mathcal V_t$ is updated and extended according to the standard match-and-merge routine of \cref{s:2}. Since all of the vistas involved have at least $\ell_t$ levels, the resulting updated vista $\mathcal V_{t+1}$ has at least $\ell_{t}+1$ levels.

Let $\mathcal V'_{t+1}$ be the vista obtained by chopping $\mathcal V_{t+1}$ until it has exactly $\ell_{t}+1$ levels. Due to \cref{l:chop,o:commute}, $\mathcal V'_{t+1}$ is isomorphic to the vista we would obtain if we first chopped $\mathcal V_t$ and the incoming vistas until they have $\ell_t$ levels and then match-and-merged them. We conclude that the last $\ell_t+1$ levels of $\mathcal V_{t+1}$ correctly represent the most recent $\ell_t+1$ communication rounds.

According to our algorithm, $\mathcal V_{t+1}$ is further chopped if its height exceeds $2n-2$ levels. Due to \cref{l:chop}, the last $\ell_{t+1}$ levels of the resulting vista correctly represent the most recent $\min\{\ell_{t}+1,2n-2\}=\ell_{t+1}$ levels, as desired. This completes the induction.

Therefore, after $t\geq 2n-2$ rounds, all agents hold vistas of the history tree of the most recent $2n-2$ communication rounds. It follows that from round $2n-2$ onward, all agents consistently return the correct output.

The algorithm is finite-state because after each round every state consists of a vista with $O(n)$ levels whose encoding satisfies the prescribed size bound. If this vista is correct, its size is $O(n^3\log n)$ bits. An encoding that exceeds this bound is discarded by the validation step.
\end{proof}

The stabilization time of this algorithm is asymptotically optimal, since a lower bound of  $2n-O(1)$ rounds was proved in~\cite{DVdisc}, even for algorithms that are not self-stabilizing.

\myskip
\mypar{Disconnected networks.} It is straightforward to extend the previous algorithm to $\tau$-union-connected networks for any $\tau\geq 1$, provided that $\tau$ (as well as $n$) is known in advance.

\begin{corollary}\label{c:1}
For every $n\geq 1$ and $\tau\geq 1$, there is a self-stabilizing, finite-state universal algorithm that operates in any $\tau$-union-connected anonymous dynamic network of size $n$ and stabilizes in at most $\tau(2n-2)$ rounds.
\end{corollary}
\begin{proof}
The algorithm is the same as the one previously discussed, except that it limits vistas to $\tau(2n - 2)$ levels instead of $2n - 2$. By the same argument used in \cref{t:1}, after $\tau(2n - 2)$ rounds, and from then on, all agents have vistas correctly representing the most recent $\tau(2n - 2)$ communication rounds. At this point, the algorithm in~\cite[Theorem~3.2]{DVdisc} correctly computes the Input Frequency function on all vistas.
\end{proof}

\section{Universal Self-Stabilizing Computation}\label{s:4}
We will now propose a self-stabilizing universal algorithm for disconnected anonymous dynamic networks that operates without any knowledge of the number of agents $n$ or the dynamic disconnectivity $\tau$.

\myskip
\mypar{Garbage coefficient.} Recall that a self-stabilizing algorithm should tolerate any amount of incorrect data initially present in the agents' local memory. If such initial data does not encode the vista of a history tree, it is definitely incorrect and can be discarded immediately. However, even if an initial state does encode a well-formed vista, the information therein may be incorrect and possibly deceiving. We define the \emph{garbage coefficient} of an agent as the height of the vista encoded by its initial state. If its initial state does not encode a well-formed vista, its garbage coefficient is defined to be $0$.

\myskip
\mypar{Algorithm overview.} We will now describe our self-stabilizing algorithm. Its pseudocode is found in \cref{l:2} of \cref{a:1}. The structure of this algorithm is similar to the one in \cref{s:3}; in particular, the same chop operation is used to gradually eliminate all incorrect levels from every vista of the history tree. The main difficulty is that, without knowledge of $n$ or $\tau$, it is impossible to determine whether a vista has the desired number of levels that are necessary to perform meaningful computations.

To overcome this difficulty, our strategy is to chop each agent's vista once every two rounds. As a result, the height of each agent's vista grows by one level every two rounds, and eventually acquires the desired amount of correct levels.

\myskip
\mypar{Algorithm details.} To control the chopping of vistas, the state of each agent is augmented with a binary flag that is toggled between $0$ and $1$ at every round. Every time an agent's flag becomes $1$, its vista is chopped. Accordingly, if the current state of an agent does not encode a well-formed vista augmented with a binary flag, the state is discarded and reset to a vista of only two nodes (a root with a single child containing the agent's input) and a flag set to $1$.

An agent's binary flag is also attached to all messages it sends, alongside its vista of the history tree. Upon receiving messages from its neighbors, the agent reduces all vistas (including its own) to the same height via chop operations, in the same way as the algorithm in \cref{s:3} does. However, this is now extended to all flags, as well. Specifically, we define a function $\mathbf{eval}(\mathcal V, b)$, which takes a vista $\mathcal V$ and a flag $b\in \{0,1\}$, and returns $2h_\mathcal V+b$, where $h_\mathcal V$ is the height of $\mathcal V$. Let the \emph{minimum pair} $(\mathcal V^\ast, b^\ast)$ be the pair that minimizes $\mathbf{eval}$ among the ones received from its neighbors, as well as the agent's own pair. Then, all vistas received by the agent, as well as the agent's own vista, are chopped until they have the same height as $\mathcal V^\ast$. After that, the incoming vistas are match-and-merged into the agent's vista, as usual. In addition, the agent's own flag is set to $b^\ast$. The flag is then toggled, and if it becomes $1$ the agent's vista is chopped once. This protocol ensures proper synchronization among agents.

Finally, if the agent's resulting vista contains a counting level, this level is used to compute and output the frequencies of all inputs (as explained in \cref{s:2,s:3}); otherwise, the agent returns a default output consisting of its own input with $100\%$ frequency.

\begin{theorem}\label{t:2}
There is a self-stabilizing universal algorithm that operates in any $\tau$-union-connected anonymous dynamic network of unknown size $n$ (and unknown $\tau$) and stabilizes in at most $\max\{4\tau n-2\mu,2\mu\}$ rounds, where $\mu$ is the minimum garbage coefficient across all agents.
\end{theorem}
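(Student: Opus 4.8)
The plan is to first establish that the agents quickly synchronize their chopping schedules, and then to use \cref{l:chop} to show that a synchronized system behaves like the standard history-tree construction running on top of a ``corrupt prefix'' of length at most $h$, which gets flushed out one level every two rounds. For the synchronization part, note that $\mathbf{eval}(\mathcal V, b) = 2h_{\mathcal V} + b$ orders pairs lexicographically by height and then by flag, and it is exactly this value that agents minimize when reducing incoming pairs to a common one. Just as the minimum view height propagates in the proof of \cref{t:1}, I would show that the global minimum of $\mathbf{eval}$ is broadcast across the connected network: once two agents hold the same height and flag, the update/toggle/conditional-chop rule keeps them in lockstep forever, and since the dynamic diameter is $d \le n-1$, after at most $n-1$ rounds all agents share a common (height, flag) pair descending from the globally minimum one, and thereafter chop on exactly the same rounds, namely once every two rounds.

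Once the schedules are aligned, over any window of $2k$ consecutive rounds each agent performs $2k$ updates and $k$ chops, in blocks of the form (update, update, chop). By \cref{l:chop} in its view-compound form, such a schedule has the same effect as first performing all $k$ chops and then all $2k$ updates, and the height-reduction step forces every agent to effectively follow the agent(s) of minimum $\mathbf{eval}$, whose initial view has height exactly the minimum garbage coefficient $h$. So, modulo synchronization, the situation is the standard construction preceded by a corrupt prefix of $h$ levels: since garbage sits near the root, each chop deletes one prefix level, so after about $2h$ rounds all $h$ garbage levels are gone; because those $h$ chops all hit garbage, the view then consists of $2h$ correct levels spanning $2h$ consecutive real rounds. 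Thereafter each round adds one correct level and every second round removes one, so at round $t \ge 2h$ the view spans roughly $h + t/2$ real rounds. By the counting-level argument of \cref{s:2}, a view spanning $2n-2$ real rounds is enough to output the Input Frequency function: if $h \ge n$ this already holds at round $2h$; if $h < n$ it holds once $h + t/2 \ge 2n - 2$, i.e., around round $4n - 2h$. Taking the maximum gives $\max\{4n - 2h, 2h\}$ (a careful handling of the off-by-one quantities, which I suppress here, makes the bound exact), and the $n-1$ synchronization overhead is always dominated by it. Finally, a view that spans $\ge 2n - 2$ correct real rounds keeps doing so, and the first visible counting level keeps yielding the correct frequencies, so the output is fixed from that round on.

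The delicate step is making precise that only the \emph{minimum} garbage coefficient $h$ matters and not the maximum. An agent that starts with a tall corrupt view discards most of it the first time it lowers its height to the current global minimum; but distinct agents carry distinct residual garbage above their correct parts, and match-and-merge grafts these fragments onto one another near the root. I would control this by analyzing the merged view compound directly: below its (bounded) garbage layers it is, by \cref{l:chop}, a genuine view compound of the appropriately time-shifted real network, and each chop strips off exactly one garbage layer until none are left, after which the whole structure is a single correct view shared by all agents. Carrying out this bookkeeping — in particular showing that within $n-1$ rounds every agent's residual garbage has shrunk to (a chopped remnant of) the minimum agent's $h$ garbage levels, and that the agents then evolve a common correct view — is the technical heart of the proof.
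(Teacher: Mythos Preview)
Your proposal is correct and follows essentially the same approach as the paper: broadcast the minimum $\mathbf{eval}$ to synchronize heights and flags within $n-1$ rounds, use \cref{l:chop} to commute chops past updates so that the computation looks like the standard construction preceded by a corrupt prefix of height $h$, and then count rounds to flush the prefix and accumulate $2n$ correct levels, splitting into the cases $h\le n$ and $h>n$. The paper packages your ``delicate step'' via a scalar it calls the \emph{leftover garbage} of an agent (its garbage coefficient minus the number of chops performed so far, truncated at~$0$), observing that this quantity equalizes whenever two agents communicate and decreases by one every two rounds for the minimum-$\mathbf{eval}$ agent; this is exactly the bookkeeping you outline, just with a convenient name attached.
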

\begin{proof}
Given the properties of the chop operation established in \cref{l:chop,o:commute}, the proof of correctness of this algorithm is straightforward and similar to the one in \cref{t:1}. Namely, at the end of any round $t\geq 0$, the last $\min\{t,h_\mathcal V\}$ levels of each vista $\mathcal V$ correctly represent the most recent communication rounds (where $h_\mathcal V$ is the height of $\mathcal V$). Since our algorithm chops each vista roughly once every two rounds, eventually all vistas end up having at least $2\tau n$ levels, all of which are correct. At this point, the algorithm in \cite[Theorem~3.2]{DVdisc} applied to such vistas stably outputs the frequencies of all inputs. It remains to show that this happens within $\max\{4\tau n-2\mu,2\mu\}$ rounds.

Recall that the garbage coefficient of an agent $p$ is defined as the number of levels of the vista initially encoded by the state of $p$. Every time a chop operation is performed on the vista of $p$, the number of these levels is decreased by one unit. We define the \emph{leftover garbage} of $p$ at round~$t\geq 0$ as the garbage coefficient of $p$ minus the number of times the vista of $p$ has been chopped up to round~$t$ (or $0$ if this number is negative). Thus, the leftover garbage measures how many of the initial levels are still present in the vista of an agent. We define $\mu_t$ as the minimum leftover garbage at round~$t\geq 0$ across all agents. Clearly, $\mu_0=\mu$.

Let $p$ be an agent whose $\mathbf{eval}$ value at round~$0$ is minimal (in particular, $p$ has the smallest garbage coefficient). By the way the algorithm updates agents' states, the $\mathbf{eval}$ value of $p$ increases by exactly one unit in each round. Furthermore, whenever an agent $q$ communicates with $p$, the $\mathbf{eval}$ value of $q$ becomes equal to that of $p$. After that, any agent communicating with $q$ acquires the same $\mathbf{eval}$ value as well, etc. In essence, $p$ is ``broadcasting'' its $\mathbf{eval}$ value across the network (cf.~\cite{DV22,DVdisc,KLO10}).

Define $S_t$ as the set of agents whose $\mathbf{eval}$ value is minimal at round~$t$. By the above reasoning, the set $S_t$ gains at least one new agent in every interval of at most $\tau$ rounds. Indeed, by the definition of $\tau$, it takes at most $\tau$ rounds for some agent in $S_t$ to communicate with an agent outside $S_t$. This process continues until $S_t$ contains all agents, which occurs in fewer than $\tau n$ rounds. Thus, within $\tau n$ rounds, all agents have the same $\mathbf{eval}$ value, and therefore the same leftover garbage. Also, every two rounds the leftover garbage of any agent with smallest $\mathbf{eval}$ is reduced by one unit (unless it is already $0$). In other words, $\mu_t$ is decremented every two rounds until it reaches $0$. In particular, $\mu_{2\mu}=0$.

Let us assume that $\mu\leq \tau n$; we will prove that the stabilization time in this case is $\max\{4\tau n-2\mu,2\mu\}=4\tau n-2\mu$. Since $\mu_{2\mu}=0$, after $\max\{2\mu,\tau n\}\leq 4\tau n-2\mu$ rounds all agents have no leftover garbage. We conclude that, after $4\tau n-2\mu=2(2\tau n-\mu)$ rounds, every vista has $\mu+(2\tau n-\mu)=2\tau n$ levels, which are all correct. From this point onward, each agent's vista will consistently represent at least $2\tau n$ previous communication rounds of the network, and hence every agent will return the correct output, due to~\cite[Theorem~3.2]{DVdisc}.

Let us now assume that $\mu>\tau n$; we will prove that the stabilization time in this case is $\max\{4\tau n-2\mu,2\mu\}=2\mu$. By the above argument, after $\max\{2\mu,\tau n\}=2\mu$ rounds, all agents have no leftover garbage. At this time, there are $2\mu>2\tau n$ correct levels in every agent's vista, and so their outputs are all correct, and will be correct in all subsequent rounds.
\end{proof}

Observe that our algorithm's stabilization time exhibits a linear dependence on the minimum garbage coefficient $\mu$. We conjecture that this cannot be avoided in dynamic networks, and so this stabilization time is asymptotically optimal. Incidentally, it is remarkable that the best stabilization time of our algorithm is not achieved when $\mu=0$, but when $\mu=\tau n$.

\section{Universal Finite-State Computation}\label{s:5}
Our final contribution is a finite-state universal algorithm for anonymous dynamic networks that operates without any knowledge of the number of agents $n$ or the dynamic disconnectivity $\tau$. This is achieved by preventing agents from updating their states under certain conditions, which leads to different agents potentially having vistas of different heights. Several new definitions are required.

\myskip
\mypar{Generalized vistas.} Before we state our algorithm, we first have to define a generalized notion of vista. This is necessary because we will have to deal with situations where vistas of different heights must be match-and-merged into each other. The result of such an operation is illustrated in \cref{fig:ssynch}. As usual, when an agent receives some vistas from its neighbors, it attaches a child $v'$ to the bottom node of its own vista, and then match-and-merges the incoming vistas, connecting their bottom nodes to $v'$ via red edges. If the incoming vistas have arbitrary heights, the resulting vista may not have only red edges connecting a level to the next, but also red edges spanning any number of levels, going downward or upward.

Formally, a generalized vista can be defined recursively as the result of the above update operation performed on arbitrary generalized vistas, where the base case is the trivial vista consisting of the single root node. Accordingly, we define the bottom node of a generalized vista as its unique sink, i.e., the unique node with no outgoing red or black edges (where black edges are understood as being directed away from the root). By definition, the bottom node of a vista represents the agents that possess this vista at the end of some round.

The notion of \emph{sub-vista} of a generalized vista $\mathcal V$ at a node $v$ is as in \cref{s:3}: it is the maximal vista included in $\mathcal V$ whose bottom node is $v$. As usual, a node $v$ in $\mathcal V$ represents all agents that at the end of some round have a vista isomorphic to the sub-vista of $\mathcal V$ at $v$.

From now on, when no confusion may arise, the term ``generalized'' will often be omitted.

\begin{figure}
\centering
\includegraphics[scale=0.5]{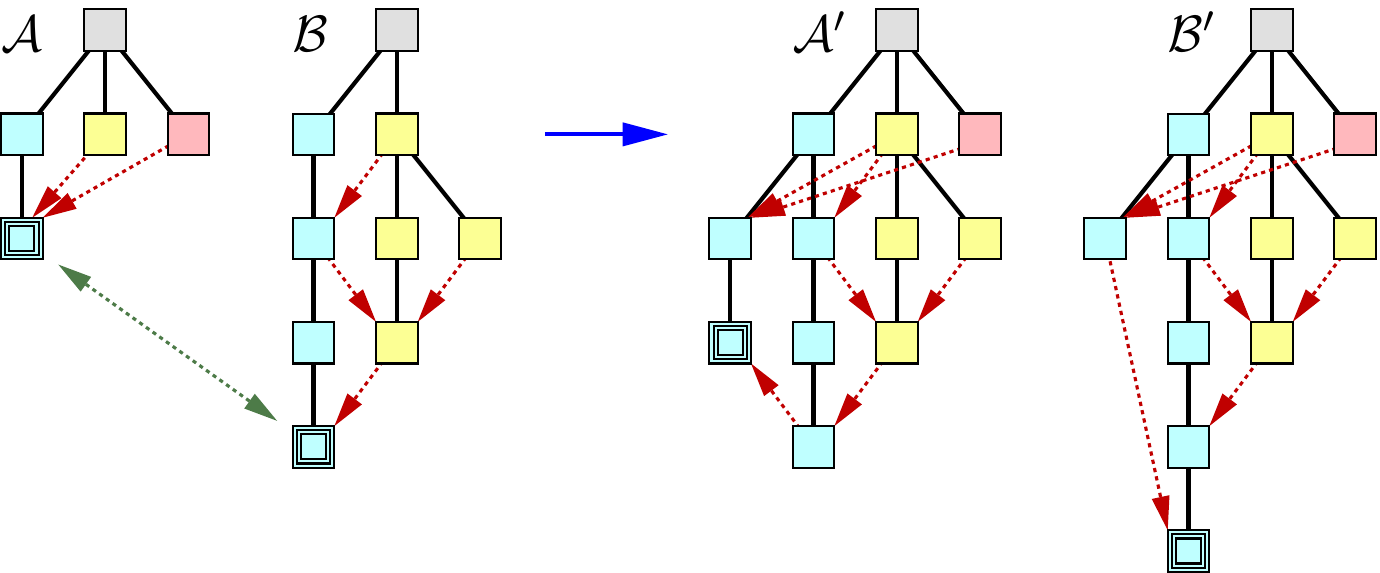}
\caption{The vistas $\mathcal A$ and $\mathcal B$ of two communicating agents. Since their heights are different, their updated versions $\mathcal A'$ and $\mathcal B'$ are generalized vistas, where red edges may go upward or skip multiple levels. The highlighted nodes are the bottom nodes.}
\label{fig:ssynch}
\end{figure}

\myskip
\mypar{Collective trees.} Since our algorithm allows agents to skip updating their states, it follows that they are no longer constructing vistas of the history tree of the network as defined in \cref{s:2}. Nonetheless, it is useful to define a structure that may act as a global reference incorporating all the (generalized) vistas that the agents are constructing.

The \emph{collective tree} at round~$t$, denoted by $\mathcal H_t$, is the structure obtained by match-and-merging together all the vistas constructed by the agents in a network up to the end of round~$t$. Of course, $\mathcal H_t$ depends on the local algorithm being executed by the agents, which affects the way they construct their vistas. For example, under the standard stabilizing algorithm outlined in \cref{s:2}, the collective tree $\mathcal H_t$ is simply the history tree of the network truncated at level $L_t$. Under the finite-state algorithm that we are going to describe in this section, $\mathcal H_t$ is constructed from generalized vistas, and may have little in common with the history tree.

We say that the collective tree \emph{acquires} a node $v$ at round~$t$ if $v$ is in $\mathcal H_{t}$ but not in $\mathcal H_{t-1}$. Similarly, the vista of an agent $p$ \emph{acquires} a node $v$ at round~$t$ if the vista of $p$ contains $v$ at round~$t$ but not at round~$t-1$.

The \emph{anonymity} $\mathbf a(v)$, where $v$ is a node in the collective tree, is defined as the total number of agents whose vista at some round has $v$ as the bottom node.

Observe that in generalized vistas and collective trees the notion of level is not well defined; we will in part restore this concept later through the definition of counting intervals.

\myskip
\mypar{Exposed pairs.} Another feature of our finite-state algorithm is the following. Recall that even if an agent updates its state in a given round, it may still choose to selectively discard some incoming messages from certain neighbors. However, there is a caveat: the algorithm guarantees that, if two agents send each other messages during a round, either both agents discard each other's messages or neither does. Thus, we make the following assumption.

\begin{assumption}\label{ass:1}
If an agent $p$ sends $k$ (identical) messages to another agent $q$ at round~$t$, then $q$ also sends $k$ (identical) messages to $p$ at the same round~$t$. Moreover, either all of these $2k$ messages are discarded, or none is.
\end{assumption}

Therefore, in any collective tree, whenever a node $v$ has an outgoing red edge of multiplicity $m_1>0$ to a child of a node $u$, then symmetrically $u$ has an outgoing red edge of multiplicity $m_2>0$ to a child of $v$. Moreover, the collective tree acquires a child of $v$ and a child of $u$ in the same round, namely, the first round in which an agent represented by $v$ exchanges messages with an agent represented by $u$.

With the above notation, if $v$ and $u$ are distinct nodes and both have a unique child in the collective tree, they are said to be an \emph{exposed pair}.

\begin{observation}\label{o:exposed}
If $v$ and $u$ are an exposed pair in $\mathcal H_t$, then the collective tree acquires the unique child of $v$ and the unique child of $u$ at the same round~$t'\leq t$.
\end{observation}

Recall from \cref{s:2} that the exposed pairs within a counting level are used to compute the Input Frequency function thanks to the equation:
\begin{equation}\label{eq:1}
m_1{\bf a}(u)=m_2{\bf a}(v)
\end{equation}
The same equation also holds in collective trees.

\begin{lemma}\label{l:exposed}
\cref{eq:1} holds for any exposed pair in a collective tree.
\end{lemma}
\begin{proof}
The left-hand side of \cref{eq:1} counts the total number of messages received by agents whose vista's bottom node is $u$ from agents whose vista's bottom node is $v$. Although these events may occur in different rounds, within the history of each agent represented by $u$ this happens exactly once: indeed, when it does, the agent updates its vista, and its bottom node is no longer $u$. Since each of the ${\bf a}(u)$ agents represented by $u$ receives $m_1$ messages from agents represented by $v$, the total number is $m_1{\bf a}(u)$.

Similarly, the right-hand side of \cref{eq:1} counts the total number of messages received by agents whose vista's bottom node is $v$ from agents whose vista's bottom node is $u$. The two sides of \cref{eq:1} are equal due to \cref{ass:1}.
\end{proof}

Note that the previous definition of exposed pair pertains to collective trees. We may give the same definition for vistas: two distinct nodes $v$ and $u$ in a vista $\mathcal V$ are an \emph{exposed pair} if each of them has a unique child in $\mathcal V$, and there is a red edge from $v$ to the child of $u$, and a red edge from $u$ to the child of $v$. However, we remark that, if $v$ and $u$ are an exposed pair in a vista, they are not necessarily an exposed pair in the collective tree, because they may have additional children in the collective tree that are missing from the vista. For this reason, we have the following:

\begin{observation}\label{o:notvista1}
\cref{o:exposed,l:exposed} may not hold for exposed pairs in a vista.
\end{observation}

\mypar{Counting intervals.} We now define an important structure that serves the same role as the counting levels of traditional history trees, but extends to generalized vistas and collective trees of disconnected networks. Recall from \cref{s:2} that a counting level is a level of a history tree where all nodes have a unique child. A counting level's outgoing red edges describe all interactions that occur in the corresponding round, and determine several exposed pairs in that level. If the network is connected at every round, this information can be used to compute the Input Frequency function via \cref{eq:1}.

Let $\mathcal H$ be a collective tree (or a generalized vista). A \emph{branch} in $\mathcal H$ is any path (using black edges only) from the root to a leaf. We define a \emph{cut} in $\mathcal H$ as a set of nodes $C$ such that any branch in $\mathcal H$ contains exactly one node of $C$.

We will now give a recursive definition of \emph{counting interval} (see \cref{fig:cut}, left). In a collective tree (or in a generalized vista) $\mathcal H$, a counting interval is a set of nodes $\mathcal I$ with the following properties:
\begin{itemize}
\item $\mathcal I$ is the union of $k+1\geq 2$ disjoint cuts of equal size $s\geq 1$. These cuts are called \emph{levels} and denoted by $L_0$, $L_1$, $L_2$, \dots, $L_k$.
\item For every $0\leq i<k$, each node in $L_{i+1}$ is the unique child of a node in $L_i$. Thus, the nodes of $\mathcal I$ can be partitioned into $s$ \emph{strands}, each of which consists of a node in $L_0$ and its $k$ descendants in $\mathcal I$.
\item For every $0\leq i<k$, if $v\in L_i$ has an outgoing red edge ending in $w\in \mathcal I$, then $w\in L_{i+1}$. Moreover, if $u\in L_i$ is the parent of $w$ and $v\neq u$, then $v$ and $u$ form an exposed pair in $\mathcal H$.
\item Consider the undirected graph on the $s$ strands of $\mathcal I$, where there is an edge between two strands if they contain two nodes not in $L_k$ that form an exposed pair. This graph is connected.
\item For every $0\leq i<k$ and for every node $v\in L_i$, the sub-vista of $\mathcal H$ at $v$ does not contain any counting intervals.
\item No proper subset of $\mathcal I$ is a counting interval in $\mathcal H$.
\end{itemize}

\begin{figure}
\centering
\includegraphics[scale=0.5]{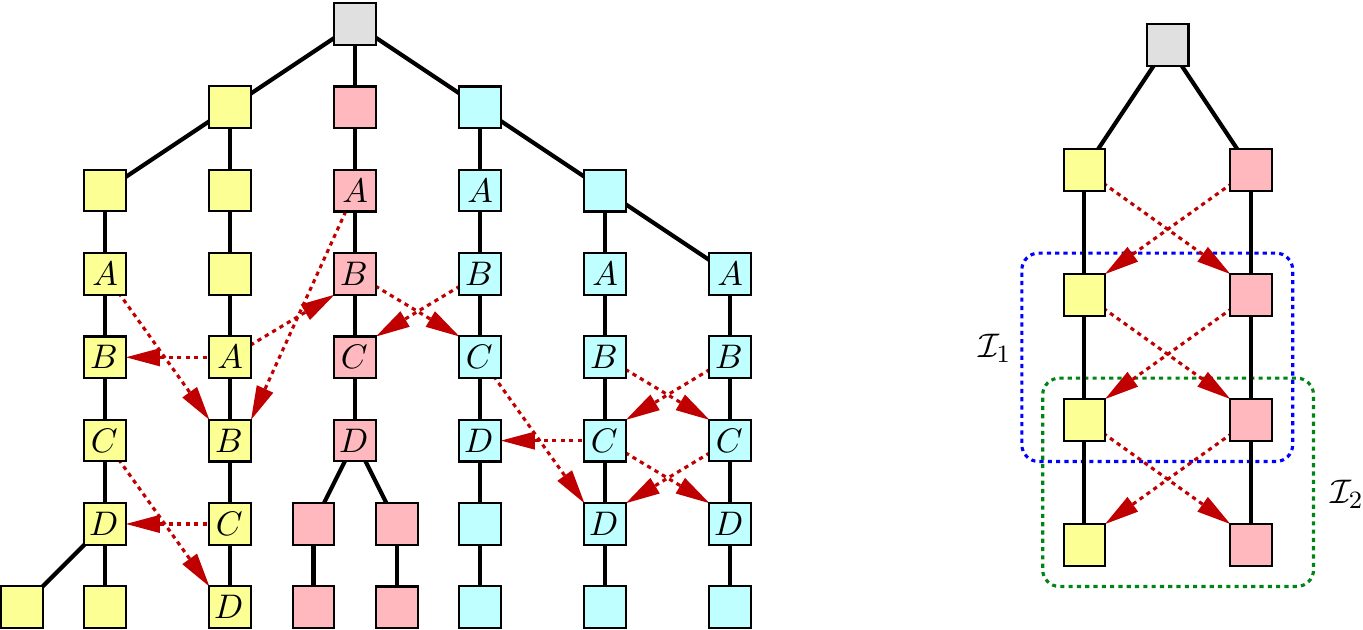}
\caption{Left: Sketch of a counting interval with four levels and six strands, where only the red edges involved in the relevant exposed pairs are displayed. Nodes with the same label are in the same level. Right: A collective tree with two counting intervals, where $\mathcal I_1$ dominates $\mathcal I_2$. Note that the sub-vista at either child of the root also contains a (trivial) counting interval, and therefore these two nodes cannot be in the first level of a counting interval.}
\label{fig:cut}
\end{figure}

Observe that, in a traditional history tree of a connected network, a counting interval is simply given by a counting level $L_i$ plus the subsequent level $L_{i+1}$, provided that none of the sub-vistas at the nodes of $L_i$ contains any counting level. In this sense, counting intervals may be regarded as a generalization of counting levels.

\myskip
\mypar{Dominance.} In a collective tree (or in a generalized vista), a counting interval $\mathcal I_1$ \emph{dominates} a counting interval $\mathcal I_2$ if every node of $\mathcal I_1$ has a strict descendant in $\mathcal I_2$.

\begin{lemma}\label{l:partial}
In any vista, dominance is a strict partial order on counting intervals.
\end{lemma}
\begin{proof}
The transitivity of the dominance relation follows directly from the transitivity of the descendant relation in a tree. As for the asymmetric property, consider a counting interval $\mathcal I_1$ that dominates a counting interval $\mathcal I_2$. By definition, any node $v$ in the last level of $\mathcal I_1$ has a descendant $u\in\mathcal I_2$. Since $v$ is the furthest node from the root within its strand of $\mathcal I_1$, all nodes in that strand are ancestors of $u$. All other nodes of $\mathcal I_1$ are unrelated to $u$, which proves that $\mathcal I_2$ does not dominate $\mathcal I_1$, and therefore dominance is asymmetric.

Since the dominance relation is transitive and asymmetric, it is a strict partial order.
\end{proof}

We stress that the definition of dominance does not contradict the definition of counting interval. That is, a counting interval $\mathcal I_1$ may dominate a counting interval $\mathcal I_2$ even though, as per the definition of counting interval, the sub-vista at any node of $\mathcal I_2$ (not in the last level) does not contain all nodes of $\mathcal I_1$. A simple example is found in \cref{fig:cut}, right.

For the next results, we need to introduce another feature of our finite-state algorithm. As we know, agents may not only discard each other's messages, but also skip updating their vistas altogether under certain conditions. One such condition is that the vista contains a counting interval.

\begin{assumption}\label{ass:2}
If an agent's vista at the beginning of round~$t$ does not contain a counting interval, the agent updates its vista during round~$t$.
\end{assumption}

We can now prove some important facts about counting intervals within collective trees.

\begin{lemma}\label{l:interval}
Let $\mathcal I$ be a counting interval in a collective tree, and let $L_0$, $L_1$, \dots, $L_k$ be the levels of $\mathcal I$. Then, there exists an integer $t$ such that, for all $0\leq i\leq k$, every node in $L_i$ is acquired by the collective tree at round~$t+i$.
\end{lemma}
\begin{proof}
By definition of counting interval, for every $0\leq i<k$ and for every node $v\in L_i$, the sub-vista $\mathcal V$ at $v$ does not contain any counting intervals. If $v$ is acquired by the collective tree at round~$r$, there is an agent $p$ whose vista is isomorphic to $\mathcal V$ at the end of round~$r$. Due to \cref{ass:2}, $p$ updates its vista at round~$r+1$, implying that the unique child of $v$ in $L_{i+1}$ is acquired at round~$r+1$.

It follows by induction that the desired property holds for each individual strand of $\mathcal I$. That is, the $k+1$ nodes within the same strand are acquired in $k+1$ consecutive rounds. Let $S_1$, $S_2$, \dots, $S_s$ be the strands of $\mathcal I$, and let $t_j$, with $1\leq j\leq s$, be the round at which the collective tree acquires the first node of $S_j$. Our goal is to prove that the $t_j$'s are all equal.

If two strands $S_j$ and $S_{j'}$ contain two nodes in $L_i$ with $0\leq i<k$ that form an exposed pair, their children in $L_{i+1}$ are acquired by the collective tree at the same round, due to \cref{o:exposed}. Thus, $t_j+i=t_{j'}+i$, and so $t_j=t_{j'}$. Now, the fact that all the $t_j$'s are equal immediately follows from the assumption that the exposed pairs in $\mathcal I\setminus L_k$ induce a connected graph on the strands, which holds by definition of counting interval.
\end{proof}

\begin{lemma}\label{l:total}
In any collective tree, dominance is a strict total order on counting intervals.
\end{lemma}
\begin{proof}
The transitive and asymmetric properties are proved as in~\cref{l:partial}. We only have to show that any two distinct counting intervals $\mathcal I_1$ and $\mathcal I_2$ in the same collective tree are comparable, i.e., one dominates the other.

By \cref{l:interval}, all nodes in the last level $L_{k_1}$ of $\mathcal I_1$ are acquired by the collective tree at the same round~$t_1$, and those in the last level $L_{k_2}$ of $\mathcal I_2$ are acquired at round~$t_2$. Also note that, given the way vistas are updated in our model of computation, if a node is acquired by the collective tree at round~$t$, its descendants are acquired strictly after round~$t$.

By the definition of counting interval, $L_{k_1}$ and $L_{k_2}$ are cuts of the collective tree. Hence, if $t_1<t_2$, every node in $L_{k_1}$ has a descendant in $L_{k_2}$, and so $\mathcal I_1$ dominates $\mathcal I_2$. Symmetrically, if $t_2<t_1$, then $\mathcal I_2$ dominates $\mathcal I_1$.

The only remaining case is $t_1=t_2$, which implies that $L_{k_1}=L_{k_2}$. If this is the case, then $\mathcal I_1\subseteq \mathcal I_2$ if $k_1<k_2$ and $\mathcal I_2\subseteq \mathcal I_1$ if $k_2<k_1$. This contradicts the definition of counting interval, which states that no proper subset of a counting interval is a counting interval. On the other hand, if $k_1=k_2$, then $\mathcal I_1=\mathcal I_2$, contradicting the assumption that the two counting intervals are distinct.

We conclude that either $\mathcal I_1$ dominates $\mathcal I_2$ or $\mathcal I_2$ dominates $\mathcal I_1$.
\end{proof}

Due to \cref{l:total}, if a collective tree has at least one counting interval, then there is a well-defined \emph{dominant} one, which dominates all others and whose strands are closest to the root.

We stress that, if an agent's vista at round~$t$ has a counting interval $\mathcal I$, the nodes of $\mathcal I$ as embedded in the collective tree $\mathcal H_t$ are not necessarily a counting interval, because a node with a unique child in the vista may have multiple children in $\mathcal H_t$ (recall that we made a similar observation about exposed pairs). Also, we have the following:

\begin{observation}\label{o:notvista2}
\cref{l:interval,l:total} may not hold for counting intervals in a vista. In particular, a vista containing counting intervals does not necessarily have a dominant one.
\end{observation}

\mypar{Algorithm overview.} We are finally ready to provide our finite-state algorithm. Its pseudocode is in \cref{l:3} of \cref{a:1}. Each agent sends its vista to all its neighbors and updates its own vista by match-and-merging incoming vistas as usual, with one exception: if the vista of an agent contains a dominant counting interval $\mathcal I$ (i.e., a counting interval that dominates all others within the vista), then all incoming vistas that have $\mathcal I$ as a dominant counting interval are discarded. Moreover, if all incoming vistas are discarded, the agent skips updating its own vista altogether for that round (hence it does not even add a child to the bottom node).

The rationale is that an agent with a dominant counting interval $\mathcal I$ ``believes'' that $\mathcal I$ is dominant for all agents and is sufficient to compute the Input Frequency function. Therefore, this agent deems it unnecessary to further update its vista, unless its belief is proven incorrect.

\myskip
\mypar{Algorithm details.} To make our algorithm more precise, we have to define the concept of \emph{isomorphism} between nodes, as well as counting intervals, in different vistas. Since we already have a graph-theoretic definition of isomorphism between vistas, an extension of the definition to counting intervals naturally follows. A node $v$ in a vista $\mathcal V$ is \emph{isomorphic} to a node $v'$ in a vista $\mathcal V'$ (or, equivalently, $v'$ is an \emph{isomorphic copy} of $v$) if the sub-vista at $v$ within $\mathcal V$ is isomorphic to the sub-vista at $v'$ within $\mathcal V'$. A counting interval $\mathcal I$ in a vista $\mathcal V$ is \emph{isomorphic} to a counting interval $\mathcal I'$ in a vista $\mathcal V'$ if there is a bijection $f\colon \mathcal I\to \mathcal I'$ such that every node $v\in \mathcal I$ is isomorphic to $f(v)\in \mathcal I'$. These definitions extend verbatim to collective trees.

According to our algorithm, if two communicating agents have dominant counting intervals in their respective vistas, and these counting intervals are isomorphic, both agents discard each other's messages. In all other cases (i.e., if either of them does not have a counting interval, or does not have a dominant counting interval, or both have dominant counting intervals which are not isomorphic), the agents use each other's messages to update their respective vistas. Hence \cref{ass:1} is indeed satisfied, and its direct consequences, such as \cref{o:exposed,l:exposed}, hold as well.

If, due to this rule, an agent ends up discarding all messages incoming from its neighbors (or if it has a dominant counting interval and received no messages), it does not update its vista at all, i.e., it does not even add a child to the bottom node of its vista. Thus, \cref{ass:2} also holds, and so do its consequences, such as \cref{l:interval,l:total}.

At the end of every round, if an agent's vista has a dominant counting interval $\mathcal I$, this is used to compute the Input Frequency function by repeated application of \cref{eq:1} on the exposed pairs within $\mathcal I$; the result is then returned as output by the agent. Otherwise, the agent simply returns the default output: its own input with a frequency of $100\%$.

\myskip
\mypar{Total agreement.} There are several challenges to proving the correctness of this relatively simple algorithm. Even if an agent has a dominant counting interval in its vista, using it to compute the Input Frequency function may lead to incorrect results, because \cref{eq:1} might not hold in that vista (cf.~\cref{o:notvista1}). Additionally, the rule that permits agents to discard messages could potentially result in a situation where all agents have an incorrect dominant counting interval, yet none of them updates its vista, preventing any progress from being made.

The next lemmas address these difficulties. We say that two agents \emph{agree} on a counting interval $\mathcal I$ at round~$t$ if both of their vistas at the end of round~$t$ contain isomorphic copies of $\mathcal I$, and $\mathcal I$ is actually a counting interval in both vistas. There is \emph{total agreement} on $\mathcal I$ if all agents in the system agree on $\mathcal I$.

\begin{lemma}\label{l:basicbranch}
Let $\mathcal{V}$ be a vista (or a collective tree) with a counting interval $\mathcal{I}$, and let $\mathcal{V}'$ be a vista (or a collective tree) that contains an isomorphic copy of $\mathcal{I}$. If $\mathcal{I}$, as embedded in $\mathcal{V}'$, is not a counting interval, then $\mathcal V'$ contains a branch that does not intersect the last level of $\mathcal I$ (hence the last level of $\mathcal{I}$ is not a cut of $\mathcal{V}'$).
\end{lemma}
\begin{proof}
By assumption, the sub-vistas of $\mathcal{V}'$ at the nodes in (the isomorphic copy of) $\mathcal{I}$ are isomorphic to the sub-vistas at the corresponding nodes of $\mathcal{V}$. Therefore, since $\mathcal{I}$ is a counting interval in $\mathcal{V}$, most defining properties of counting intervals also hold for $\mathcal{I}$ as embedded in $\mathcal{V}'$.

The only situation in which $\mathcal{I}$ is not a counting interval in $\mathcal{V}'$ is when $\mathcal{V}'$ contains a branch that includes nodes that are not present in $\mathcal{V}$ (and therefore are not in a sub-vista at any node of $\mathcal{I}$), which prevents some level of $\mathcal{I}$ from being a cut of $\mathcal{V}'$. In that case, the last level of $\mathcal I$ is also not a cut of $\mathcal V'$.

This includes, in particular, the case where a node in some level of $\mathcal I$ (other than the last) has multiple children in $\mathcal{V}'$, with only one child belonging to $\mathcal{I}$. Again, such a configuration implies the existence of a branch that does not intersect the last level of $\mathcal{I}$.
\end{proof}

\begin{corollary}\label{c:inherit}
If $\mathcal I$ is a counting interval in $\mathcal H_t$, then $\mathcal I$ is also a counting interval in any vista that contains an isomorphic copy of it at round~$t$.
\end{corollary}
\begin{proof}
By \cref{l:basicbranch}, if a vista $\mathcal V$ at round~$t$ contains an isomorphic copy of $\mathcal I$, but $\mathcal I$ is not a counting interval in $\mathcal V$, then there is a branch in $\mathcal V$ that does not intersect the last level $L_k$ of $\mathcal I$. Such a branch is also in $\mathcal H_t$ by definition of collective tree, which contradicts the fact that $L_k$ is a cut in $\mathcal H_t$.
\end{proof}

\begin{lemma}\label{l:collective}
If there is total agreement on a counting interval $\mathcal I$ at round~$t$, then $\mathcal I$ is also a counting interval in the collective tree $\mathcal H_t$. Moreover, the nodes in the last level of $\mathcal I$ represent a partition of all agents in the system.
\end{lemma}
\begin{proof}
Since all nodes of $\mathcal I$ are in all vistas at round~$t$, isomorphic copies of them are also found in $\mathcal H_t$. Thus, $\mathcal H_t$ contains an isomorphic copy of $\mathcal I$. Due to \cref{l:basicbranch}, if $\mathcal I$ is not a counting interval in $\mathcal H_t$, then there is a branch $B$ in $\mathcal H_t$ that does not intersect the last level of $\mathcal I$. By definition of collective tree, there is an agent whose vista $\mathcal V$ at round~$t$ contains the last node $b$ of $B$. Thus, $\mathcal V$ contains the sub-vista at $b$, which includes the entire branch $B$. It follows that the last level of $\mathcal I$ does not intersect $B$ in $\mathcal V$, and therefore $\mathcal I$ is not a counting interval in $\mathcal V$, contradicting our assumption. We conclude that $\mathcal I$ is a counting interval in $\mathcal H_t$.

Consider an agent $p$ and its vista $\mathcal V$ at round~$t$, and let $v$ be the bottom node of $\mathcal V$. Since $v$ is the unique sink in $\mathcal V$, it cannot be an ancestor of any node in $\mathcal I$. In particular, $v$ is either in the last level $L_k$ of $\mathcal I$ or a descendant of a unique node in $L_k$, because $L_k$ is a cut in $\mathcal V$. Hence, there is a unique node in $L_k$ that represents $p$. As this is true of every agent, the nodes in $L_k$ represent a partition of all the agents in the system.
\end{proof}

\begin{corollary}\label{c:collective}
If at round~$t$ there is total agreement on a counting interval $\mathcal I$ that is dominant in the vistas of all agents, then all agents return the correct output at round~$t$.
\end{corollary}
\begin{proof}
By \cref{l:collective}, $\mathcal I$ is a counting interval in $\mathcal H_t$, and the nodes in its last level $L_k$ represent a partition of all agents in the system. Since each level of $\mathcal I$ is a cut in $\mathcal H_t$, any two nodes in the same strand of $\mathcal I$ represent the same set of agents, and therefore have the same anonymity. The strands of $\mathcal I$ thus represent a partition of all the agents in the system; in order to compute the Input Frequency function, it suffices to determine the ratios between the anonymities of nodes in each pair of strands of $\mathcal I$.

If $\mathcal I$ is dominant in the vistas of all agents at round~$t$, then all agents use $\mathcal I$ to compute the Input Frequency function by repeated application of \cref{eq:1} on the exposed pairs within $\mathcal I\setminus L_k$. Since $\mathcal I$ is in $\mathcal H_t$, every agent correctly computes the ratio between the anonymities of the two nodes in each of these exposed pairs, due to \cref{l:exposed}. By definition of counting interval, the exposed pairs in $\mathcal I\setminus L_k$ induce a connected graph on the strands, which implies that every agent can compute the ratio between the anonymities of the nodes in any pair of strands of $\mathcal I$. We conclude that all agents correctly compute the Input Frequency function at round~$t$.
\end{proof}

\cref{c:collective} provides a sufficient condition for the algorithm to be correct. Thus, we only have to prove that eventually total agreement on a dominant counting interval is achieved and maintained.

\myskip
\mypar{Collective tree dynamics.} To begin with, the collective tree does not have any counting intervals, except in the trivial case where all agents have the same input (then the root of the collective tree and its unique child constitute a counting interval). We also remark that the existence of a counting interval in the collective tree does not necessarily imply the presence of a counting interval in any individual agent's vista. Next, we will study the way counting intervals are formed in the collective tree and how they may eventually become counting intervals in the agents' vistas.

Recall that a branch in the collective tree is any path from its root to a leaf. Thus, the total number of branches is equal to the number of leaves. We say that $k$ \emph{branchings} occur in the collective tree during round~$t$ if $\mathcal H_t$ has exactly $k$ more branches than $\mathcal H_{t-1}$. The collective tree \emph{acquires} a counting interval $\mathcal I$ at round~$t$ if $\mathcal I$ is a counting interval in $\mathcal H_{t}$ but not in $\mathcal H_{t-1}$. Similarly, the collective tree \emph{loses} a counting interval $\mathcal I$ at round~$t$ if $\mathcal I$ is a counting interval in $\mathcal H_{t-1}$ but not in $\mathcal H_{t}$.

The above definitions straightforwardly extend to vistas.

Since each agent starts with a vista containing a single branch, and the leaves of a vista represent disjoint and non-empty sets of agents, it follows that at most $n-1$ branchings may occur in each agent's vista. The same holds for the collective tree, for similar reasons. Thus, we make this important observation:

\begin{observation}\label{o:branching}
In total, at most $n-1$ branchings may occur in the collective tree, and at most $n(n-1)$ branchings may occur in the vistas of the $n$ agents.
\end{observation}

\begin{lemma}\label{l:never}
If there is total agreement on a counting interval $\mathcal I$ at round~$t$, then $\mathcal I$ is not lost by any of the vistas nor by the collective tree at any round~$t'>t$.
\end{lemma}
\begin{proof}
By \cref{l:collective}, $\mathcal I$ is a counting interval of $\mathcal H_t$ and the nodes in the last level $L_k$ of $\mathcal I$ represent all agents in the system. By \cref{l:basicbranch}, if the vista of an agent $p$ loses $\mathcal I$ at round~$t+1$, it acquires a branch that does not intersect $L_k$. The agents represented by the leaf within this branch are not represented by any node of $L_k$, which yields a contradiction.

Thus, none of the vistas loses $\mathcal I$ at round~$t+1$. In other words, there is total agreement on $\mathcal I$ at round~$t+1$ and, by \cref{l:collective}, $\mathcal I$ is a counting interval of $\mathcal H_{t+1}$, as well. Since we have proved our claim for $t'=t+1$, we can now use induction to generalize it to any $t'>t$.
\end{proof}

\begin{lemma}\label{l:branch}
If the collective tree (respectively, an agent's vista) loses a counting interval at round~$t$, then a branching occurs in the collective tree (respectively, in the same agent's vista) during round~$t$.
\end{lemma}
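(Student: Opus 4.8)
The plan is to prove the contrapositive, treating the collective-tree and single-view versions uniformly (in the view case, if the agent does not update at round~$t$ then its view is unchanged and there is nothing to prove, so assume it updates). First I would establish a ``persistence of the past'' property of the update / match-and-merge operation: $\mathcal H_{t-1}$ (respectively, the agent's view $\mathcal V$ at round~$t-1$) embeds into $\mathcal H_t$ (respectively $\mathcal V'$) as an induced substructure. The point is that the sub-view rooted at any node is fixed at the moment the node is created: new black children and new incoming red edges are attached only to nodes created in the current round, and two nodes matched during a match-and-merge already have isomorphic sub-views, hence identical incoming red edges and multiplicities. Consequently, no two nodes created at rounds $\le t-1$ are merged at round~$t$, the multiplicities of old red edges never change, and every leaf of $\mathcal H_{t-1}$ either remains a leaf of $\mathcal H_t$ or gains one or more new (round-$t$) black children.

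Next I would record a leaf-counting inequality. A node created at round~$t$ can have only round-$t$ black children, since otherwise an older node would end up with two black parents; hence the round-$t$ nodes, together with the black edges internal to them, form a forest hanging below the old nodes. Descending from any old node that gains a new black child along new black children until reaching a black leaf defines a map into the set of round-$t$ black leaves of $\mathcal H_t$, and this map is injective because the first old ancestor of such a leaf is unique. Combined with the first step, this gives
\[
\#\mathrm{leaves}(\mathcal H_t)\;\ge\;\#\mathrm{leaves}(\mathcal H_{t-1})+k,
\]
where $k$ is the number of non-leaves of $\mathcal H_{t-1}$ that gain a new black child at round~$t$. In particular, if any old non-leaf gains a new black child at round~$t$, a branching occurs.

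Finally I would show that if no branching occurs at round~$t$ then $C$ is still a counting cut in $\mathcal H_t$ (respectively $\mathcal V'$), contradicting the hypothesis. Every node of $C$ has a (unique) child in $\mathcal H_{t-1}$, so it is a non-leaf, so by $k=0$ it gains no new child and retains its unique child; moreover the children of the nodes of $C$, and the red edges and multiplicities among them, are old and frozen, so the exposed pairs within $C$ survive and $C$ still induces a connected subgraph of the exposed-pair graph. And $C$ is still a cut: a leaf of $\mathcal H_t$ is either an old leaf, whose black root path lies in $\mathcal H_{t-1}$ and therefore meets $C$ exactly once, or a round-$t$ node, whose black root path is that of its first old ancestor~$w$ --- a leaf of $\mathcal H_{t-1}$ by the second step, hence meeting $C$ exactly once --- followed by round-$t$ nodes, none of which lie in $C$; either way the path meets $C$ exactly once.

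The main obstacle is the first step: one has to argue carefully, from the precise rules for match-and-merging generalized views of possibly different heights, that the operation never retroactively alters the sub-view of an already-existing node, so that old nodes are never merged and old multiplicities never change. A secondary subtlety, which is why the second step is phrased through the forest of new nodes rather than a simple one-new-child-per-agent count, is that an update imports entire subtrees from neighbors, not just the single new bottom node of the standard history tree.
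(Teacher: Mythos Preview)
Your argument is correct and is, in essence, a careful expansion of the paper's own proof. The paper simply observes that red edges persist from $\mathcal H_{t-1}$ to $\mathcal H_t$, so $C$ can only be lost if a node of $C$ gains an extra child or a new branch avoids $C$, and declares that ``clearly'' both force a branching; your persistence-of-the-past property and leaf-counting inequality make precise exactly what the paper leaves implicit, and the contrapositive framing is the same argument read in the other direction.
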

\begin{proof}
We will only prove our claim for collective trees, as the proof for vistas is identical. If $\mathcal H_{t-1}$ contains a counting interval $\mathcal I$, then clearly the nodes of $\mathcal I$ are also contained in $\mathcal H_{t}$. If $\mathcal I$ is not a counting interval of $\mathcal H_{t}$, then there is a branch in $\mathcal H_{t}$ that does not intersect the last level $L_k$ of $\mathcal I$, due to \cref{l:basicbranch}. Since $L_k$ is a cut in $\mathcal H_{t-1}$, this means that a branching occurs in the collective tree during round~$t$.
\end{proof}

\begin{lemma}\label{l:creation}
If for $\tau$ consecutive rounds there is no agent whose vista has a counting interval, and no branching occurs in the collective tree, then the collective tree acquires a counting interval during one of these $\tau$ rounds and never loses it.
\end{lemma}
\begin{proof}
Let $r$, $r+1$, \dots, $r+\tau-1$ be such rounds. By definition of the dynamic disconnectivity $\tau$, there exists a minimal integer $k\leq \tau$, as well as an integer $t$, with $r\leq t\leq r+\tau-k$, such that the union of the $k$ network multigraphs $G_t$, $G_{t+1}$, \dots, $G_{t+k-1}$ is connected. For all $0\leq i\leq k$, we define the level $L_i$ to be the set of leaves of the collective tree $\mathcal H_{t+i-1}$. We will prove that the $L_i$'s satisfy all the properties required by the definition of counting interval.

According to our algorithm, since no agents' vistas have counting intervals, no messages are discarded and all agents update their vistas in each of the $k$ consecutive rounds~$t$, $t+1$, \dots, $t+k-1$. Thus, if an agent $p$'s vista at the beginning of round~$t+i$, with $0\leq i<k$, has a bottom node isomorphic to a node $v\in L_i$, the vista of $p$ is updated at the same round, and its new bottom node is isomorphic to a node $v'\in L_{i+1}$. Moreover, since $v$ is a leaf of $\mathcal H_{t+i-1}$, such a $p$ exists, and therefore $v'$ is acquired by the collective tree at round~$t+i$.

We conclude that each leaf of the collective tree acquires a child at round $t+i$, for all $0\leq i<k$. Since no branching occurs in the collective tree during round~$t+i$, this child is unique, and hence all levels $L_i$ have the same size $s$, for all $0\leq i\leq k$. Thus, the $s$ strands are well defined. Also, each $L_i$ is a cut of $\mathcal H_{t+i-1}$, because it is the set of its leaves, and hence a cut in $\mathcal H_{t+k-1}$, as well.

Let us now turn to the configuration of the red edges. We argue that, for all $0\leq i<k$,  if two agents $p$ and $q$ send each other messages during round~$t+i$, and $p$'s bottom node $v$ is in $L_i$, then $q$'s bottom node $u$ is in $L_i$ as well. For otherwise, $u$ would not be a leaf in the collective tree, and the reception of a message from $p$ would cause $q$ to create a new child of $u$, contradicting the assumption that no branching occurs during round~$t+i$. This is because $v$ is acquired by the collective tree at round~$t+i-1$, and so none of the agents represented by $u$ could have received a vista with bottom node $v$ prior to round~$t+i$.

In the above situation, $v$ and $u$ become an exposed pair within $\mathcal H_{t+i}$ (provided that $v\neq u$), and hence within $\mathcal H_{t+k-1}$, as well. Also, since the union of the $k$ communication multigraphs from round~$t$ to round~$t+k-1$ is connected, it follows that all agents' bottom nodes are leaves at the beginning of each of these rounds. For otherwise, an agent with a leaf bottom node would eventually communicate with an agent whose bottom node is not a leaf, and a branching would occur as argued above. (Note that an agent of one type cannot become an agent of the other type.)

Therefore, during these $k$ rounds, a set of exposed pairs is created that induces a connected graph on the strands. In addition, by the minimality of $k$, the same is not true for any proper subset of these $k+1$ levels.

Finally, the fact that no sub-vista of a node in $L_i$ contains a counting interval is satisfied by assumption, for all $0\leq i<k$. We conclude that the collective tree acquires the counting interval $L_0\cup L_1\cup\dots\cup L_k$ at round~$t+k-1$.

Assume for a contradiction that the collective tree loses this counting interval at a subsequent round~$t'\geq t+k$. By \cref{l:basicbranch}, $\mathcal H_{t'}$ contains a branch that does not intersect $L_k$. However, this means that an ancestor of a node in $L_k$ has acquired a new child at round~$t'$, which contradicts the fact that all agents' bottom nodes were in $L_k$ at round~$t+k-1<t'$. Therefore, the counting interval is never lost.
\end{proof}

We remark that \cref{l:creation} only holds for collective trees and not for vistas.

\myskip
\mypar{Agent interactions.} Suppose that two agents $p$ and $q$ send each other messages at round~$t$, which are not discarded. Let $\mathcal I$ be a counting interval in the vista $\mathcal V$ of $p$ at the end of round~$t-1$, and assume that the vista $\mathcal W$ of $q$ at the end of round~$t-1$ does not contain $\mathcal I$. At the end of round~$t$, there are four possibilities:
\begin{itemize}
\item The vista of $p$ loses $\mathcal I$ and the vista of $q$ acquires $\mathcal I$. In this case, a branching occurs in the vista of $p$, due to \cref{l:branch}.
\item The vista of $p$ loses $\mathcal I$ and the vista of $q$ does not acquire $\mathcal I$. Also in this case, a branching occurs in the vista of $p$, due to \cref{l:branch}.
\item The vista of $p$ does not lose $\mathcal I$ and the vista of $q$ acquires $\mathcal I$.
\item The vista of $p$ does not lose $\mathcal I$ and the vista of $q$ does not acquire $\mathcal I$.
\end{itemize}

We will prove that, in the latter case, a branching occurs in the vista of $q$.

\begin{lemma}\label{l:interaction}
With the above notation, if the vista of $p$ does not lose $\mathcal I$ and the vista of $q$ does not acquire $\mathcal I$, then a branching occurs in the vista of $q$ at round~$t$.
\end{lemma}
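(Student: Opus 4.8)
The plan is to analyze how round~$t$ affects the leaves --- hence the branches --- of $p'$'s view. Since $p$ and $p'$ exchange messages that are not discarded, both agents update at round~$t$; in particular $p'$ attaches a fresh child $v'$ to the bottom node $b'$ of $\mathcal{V}'$ and then match-and-merges $\mathcal{V}$ (and any other non-discarded incoming view) into the result, obtaining its view $\mathcal{V}'_{+}$ at round~$t$. First I would isolate an elementary bookkeeping fact: match-and-merge never re-parents a pre-existing node, and it can prevent a leaf of $\mathcal{V}'$ from remaining a leaf only by giving it a fresh child; hence the number of branches never decreases, and a branching occurs as soon as $\mathcal{V}'_{+}$ contains either a \emph{fresh} leaf (one that is not a node of $\mathcal{V}'$) none of whose ancestors in $\mathcal{V}'_{+}$ is a leaf of $\mathcal{V}'$, or a leaf of $\mathcal{V}'$ with two or more fresh leaves below it in $\mathcal{V}'_{+}$. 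The rest of the argument produces one of these configurations.

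Next I would use that $C$ is a counting cut of $\mathcal{V}$: all nodes of $C$, their unique children, and the red edges among them appear in $\mathcal{V}$, hence in $\mathcal{V}'_{+}$; and as long as every node of $C$ still has a unique child in $\mathcal{V}'_{+}$, the exposed-pair connectivity of $C$ survives the merge (which only sums red-edge multiplicities). Since $p'$ does \emph{not} acquire $C$, the cut $C$ therefore fails to be a counting cut of $\mathcal{V}'_{+}$ for exactly one of two reasons: (i) some $c \in C$ acquires a second black child in $\mathcal{V}'_{+}$; or (ii) $C$ is no longer a cut of $\mathcal{V}'_{+}$, i.e.\ some branch of $\mathcal{V}'_{+}$ avoids $C$.

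The crux --- and the point where the hypothesis that $p$ does not lose $C$ is indispensable, so that \cref{l:branch} does not apply directly --- is the following compatibility observation. Since $p$ also match-and-merges $\mathcal{V}'$ into $\mathcal{V}$ at round~$t$ and $C$ remains a counting cut of the result, $\mathcal{V}'$ cannot conflict with $C$: every branch of $\mathcal{V}'$ already contains a node of $C$ (otherwise, because merging adds no ancestors to a pre-existing node, that branch would survive into $p$'s updated view still disjoint from $C$, contradicting that $C$ is a cut there), and every node of $C$ present in $\mathcal{V}'$ has at most one child, equal to the one it has in $\mathcal{V}$ (otherwise that node would acquire a second child in $p$'s updated view). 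In particular, no leaf of $\mathcal{V}'$ lies strictly above a node of $C$.

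With this, the two cases close. In case~(i), the extra child $w$ of $c$ is distinct from $c$'s unique child $c^{\ast}$ in $\mathcal{V}$; it cannot come from $\mathcal{V}$, and by compatibility it cannot come from $\mathcal{V}'$, so $w$ is fresh and carries an entirely fresh subtree, giving a fresh leaf $\ell$ below $c$. Using compatibility (and that the nodes of $C$ form an antichain), one checks that either $\ell$ has no ancestor in $\mathcal{V}'_{+}$ that is a leaf of $\mathcal{V}'$, or else $c$ itself is a dangling leaf of $\mathcal{V}'$ and then carries two fresh leaves ($\ell$ and one below $c^{\ast}$); either way the bookkeeping fact gives a branching. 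In case~(ii), let $\ell$ be a leaf of $\mathcal{V}'_{+}$ on a branch avoiding $C$; if $\ell$, or any ancestor of $\ell$ in $\mathcal{V}'_{+}$, were a leaf of $\mathcal{V}'$, then by compatibility that node's branch in $\mathcal{V}'$ --- hence in $\mathcal{V}'_{+}$ --- would already contain a node of $C$, contradicting that $\ell$'s branch avoids $C$; thus $\ell$ is fresh with no leaf of $\mathcal{V}'$ among its ancestors, and again a branching occurs. I expect the main difficulty to be exactly this bookkeeping in the setting of \emph{generalized} views --- where a view carries many leaves at many depths, red edges may skip levels or point upward, and a single update can both open new branches and extinguish old dangling ones --- the decisive point being that ``$p$ does not lose $C$'' forbids $\mathcal{V}'$ from ever having had a dangling leaf strictly above $C$, which is the only way a genuinely new obstruction to $C$ could fail to create a net new branch.
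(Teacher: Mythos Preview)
Your approach is essentially the same as the paper's, though considerably more laborious. The paper dispatches the lemma in three lines by introducing the intermediate object $\mathcal T$ obtained by match-and-merging $\mathcal V$ and $\mathcal V'$. Since $p$ does not lose $C$ and $\mathcal T$ is a substructure of $p$'s updated view, $C$ is a counting cut of $\mathcal T$; since $\mathcal T$ is in turn a substructure of $\mathcal V'_+$ and $C$ is \emph{not} a counting cut there, the reasoning of \cref{l:branch} (applied to the inclusion $\mathcal T\subseteq\mathcal V'_+$ rather than to consecutive rounds) yields a branching from $\mathcal T$ to $\mathcal V'_+$, hence from $\mathcal V'\subseteq\mathcal T$ to $\mathcal V'_+$. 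Your compatibility observation and two-case split are an unwinding of exactly this argument without naming $\mathcal T$.

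There is, however, a genuine gap in your compatibility observation. You assert that every branch of $\mathcal V'$ contains a node of $C$, arguing that otherwise ``that branch would survive into $p$'s updated view still disjoint from $C$''. This fails in the generalized-view setting: a leaf $f$ of $\mathcal V'$ may well be a strict ancestor of some $c\in C$ (its sub-view being isomorphic to that of a proper ancestor of $c$ in $\mathcal V$); then the branch of $\mathcal V'$ ending at $f$ is disjoint from $C$, yet after $p$ merges $\mathcal V'$ the node $f$ acquires $c$ and $c$'s child as descendants and ceases to be a leaf, so $C$ remains a cut in $p$'s updated view. Hence your derived claim ``no leaf of $\mathcal V'$ lies strictly above a node of $C$'' is unjustified. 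The fix is easy and your case analysis survives it: whenever the fresh leaf $\ell$ (in either case~(i) or~(ii)) has a leaf-of-$\mathcal V'$ ancestor $f$ lying strictly above $C$, that same $f$ also acquires a leaf descendant through $c^\ast$ in $\mathcal V'_+$, so $f$ carries at least two leaves and a branching occurs regardless. The paper's $\mathcal T$-based formulation sidesteps this subtlety entirely.
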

\begin{proof}
By assumption, $\mathcal I$ is a counting interval in the vista $\mathcal V'$ of $p$ at round~$t$, but not in the vista $\mathcal W'$ of $q$ at round~$t$. However, $\mathcal W'$ contains an isomorphic copy of $\mathcal I$, because $\mathcal W'$ is obtained by match-and-merging $\mathcal V$ and $\mathcal W$ (plus, possibly, other vistas), and $\mathcal V$ contains $\mathcal I$.

Due to \cref{l:basicbranch}, $\mathcal W'$ contains a branch that does not intersect the last level $L_k$ of $\mathcal I$. This branch is not present in $\mathcal V'$, and therefore not in $\mathcal W$, because $\mathcal V'$ is obtained by match-and-merging $\mathcal V$ and $\mathcal W$ (plus, possibly, other vistas). We conclude that a branching occurs in $q$'s vista at round~$t$.
\end{proof}

\mypar{Undominated counting intervals.} We say that a counting interval $\mathcal I$ in an agent's vista $\mathcal V$ at round~$t$ is \emph{undominated} if no counting interval in $\mathcal V$ dominates $\mathcal I$. In other words, $\mathcal I$ is a maximal element in the dominance partial order within $\mathcal V$ (cf.~\cref{l:partial}). We stress that a vista may have multiple undominated counting intervals, none of which is dominant (cf.~\cref{o:notvista2}).

\begin{lemma}\label{l:undominated1}
If the counting interval $\mathcal I$ is undominated in a vista (or collective tree), then $\mathcal I$ is undominated in any vista (or collective tree) that contains $\mathcal I$ as a counting interval.
\end{lemma}
\begin{proof}
We will carry out the proof for vistas; the case of collective trees is handled identically. Let the vista $\mathcal V$ contain the undominated counting interval $\mathcal I$. For a contradiction, assume that a vista $\mathcal V'$ contains the counting intervals $\mathcal I$ and $\mathcal I'$, where $\mathcal I'$ dominates $\mathcal I$. Then, all nodes of $\mathcal I'$ are ancestors of nodes of $\mathcal I$, and hence are contained in every vista that contains $\mathcal I$. We infer that $\mathcal V$ also contains $\mathcal I'$, which is not a counting interval in $\mathcal V$, or else it would dominate $\mathcal I$.

By \cref{l:basicbranch}, $\mathcal V$ contains a branch $B$ that does not intersect the last level of $\mathcal I'$. However, this implies that $B$ does not intersect the last level of $\mathcal I$ either, contradicting the fact that this level is a cut of $\mathcal V$.
\end{proof}

\begin{lemma}\label{l:undominated2}
Let $p$ be an agent whose vista at the beginning of round~$t$ contains an undominated counting interval $\mathcal I$, and let $q$ be an agent whose vista at the beginning of round~$t$ does not have $\mathcal I$ as a counting interval. Then, if $p$ and $q$ are neighbors at round~$t$, they do not discard each other's messages.
\end{lemma}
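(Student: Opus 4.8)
The plan is to argue by contradiction directly from the discard rule of the algorithm. Recall that two neighboring agents discard each other's messages at a round if and only if \emph{both} of their current views contain a dominant counting cut and these two dominant counting cuts are isomorphic. So suppose, toward a contradiction, that $p$ and $p'$ discard each other's messages at round~$t$. Then the view of $p$ at the beginning of round~$t$ has a dominant counting cut $D$, the view of $p'$ at the beginning of round~$t$ has a dominant counting cut $D'$, and $D$ is isomorphic to $D'$.

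The first real step is to identify $D$ with $C$. Being dominant in the view of $p$, the cut $D$ dominates every counting cut of that view \emph{other than itself}; in particular, if $C\neq D$ then $D$ dominates $C$. But $C$ is undominated in the view of $p$ by hypothesis, so no counting cut of that view dominates $C$. Hence $C=D$, i.e., the undominated cut $C$ is in fact \emph{the} dominant counting cut of $p$'s view. (Note that the only alternative consistent with $p$ having the undominated counting cut $C$ is that $p$'s view has no dominant counting cut at all; but that case is already excluded by the assumption that $p$ discards, since discarding requires $p$ to possess a dominant counting cut.)

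The second step is to invoke the hypothesis on $p'$. Since $D'\cong D=C$, the view of $p'$ at the beginning of round~$t$ contains an isomorphic copy of $C$ as a (dominant) counting cut; equivalently, the view of $p'$ ``contains $C$'' in the sense used throughout this section (cf.\ the definition of agreement and the statement of \cref{l:undominated1}). This contradicts the hypothesis that the view of $p'$ at the beginning of round~$t$ does not contain $C$. Therefore $p$ and $p'$ do not discard each other's messages, which is the claim.

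The argument is short, and I do not expect a genuine obstacle; the one point that needs care is the identification $C=D$, where one must use the precise meaning of ``dominant'' (dominates all \emph{other} counting cuts of the view) together with the precise meaning of ``undominated'' (no counting cut dominates it), as well as the convention that a view ``contains $C$'' means it contains an isomorphic copy of $C$. \cref{l:undominated1} is not strictly required for the proof, but it is what guarantees that, once $C$ is undominated somewhere, it remains a coherent reference cut, which is ultimately why this discard rule interacts sensibly with the dominance order.
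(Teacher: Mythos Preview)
Your proof is correct and follows essentially the same approach as the paper's: both argue that discarding would force $p$ to have a dominant counting cut, which must coincide with the undominated cut $C$, and hence $p'$'s dominant cut would be $C$ as well, contradicting the hypothesis. Your version is somewhat more explicit about why ``undominated'' plus ``a dominant cut exists'' forces $C=D$, but the argument is the same.
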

\begin{proof}
Our algorithm makes $p$ and $q$ discard each other's messages only if they have the same dominant counting interval at the beginning of round~$t$. If this is the case, since $\mathcal I$ is undominated in the vista of $p$, then $\mathcal I$ is dominant. However, by assumption the vista of $q$ does not have $\mathcal I$ as a counting interval, and therefore it does not have the same dominant counting interval as the vista of $p$.  Hence, $p$ and $q$ do not discard each other's messages.
\end{proof}

\mypar{Classification of rounds.} We classify the rounds into three types:
\begin{enumerate}
\item[(i)]Rounds at the beginning of which no agent's vista has a counting interval.  
\item[(ii)]Rounds at the beginning of which some agents' vistas have a counting interval, but there is no total agreement on any counting interval.
\item[(iii)]Rounds at the beginning of which there is total agreement on a counting interval.
\end{enumerate}

Since the network is $\tau$-union-connected, it is convenient to define a \emph{block} of rounds as any sequence of $\tau$ consecutive rounds. Thus, a block of type~(i) is a block in which all $\tau$ rounds are of type~(i), etc.

\begin{lemma}\label{l:final1}
There can be at most $2n-1$ disjoint (but not necessarily contiguous) blocks of type~(i) before the first round of type~(iii) occurs.
\end{lemma}
\begin{proof}
During rounds of type~(i), according to our algorithm, no messages are discarded. If no branching occurs in the collective tree for an entire block of type~(i), a new counting interval $\mathcal I$ is acquired by the collective tree and is never lost, due to \cref{l:creation}. Recall from \cref{o:branching} that a branching in the collective tree may occur at most $n-1$ times, and hence such a counting interval $\mathcal I$ is acquired within $n$ blocks of type~(i).

After that, since $\mathcal I$ is never lost by the collective tree, in every block of type~(i) progress is made toward total agreement on $\mathcal I$ by the agents. Specifically, for every node $v\in \mathcal I$, within each block of type~(i) at least one more agent's vista acquires $v$ (by definition of $\tau$, and since no messages are discarded), until all agents have acquired $v$. Thus, it takes at most $n-1$ disjoint blocks of type~(i) for all agents to acquire all nodes of $\mathcal I$. At this point total agreement on $\mathcal I$ is achieved, because \cref{c:inherit} implies that $\mathcal I$ is a counting interval of any vista that contains it.

In conclusion, there can be at most $2n-1$ disjoint (not necessarily contiguous) blocks of type~(i) before total agreement on a counting interval is reached and a round of type~(iii) occurs.
\end{proof}

\begin{lemma}\label{l:final2}
There can be at most $n+2b_1-1$ disjoint (but not necessarily contiguous) blocks of type~(ii), where $b_1$ is the total number of branchings that occur in agents' vistas during these blocks, before the first round of type~(iii) occurs.
\end{lemma}
\begin{proof}
By definition, in a round of type~(ii) there are agents whose vistas contain counting intervals, but no total agreement on any of them. Let $\mathcal I$ be any undominated counting interval in an agent's vista. Due to \cref{l:branch}, in every vista that loses $\mathcal I$ a branching must occur. Moreover, in every block of type~(ii), at least one agent $p$ whose vista has $\mathcal I$ as a counting interval exchanges messages with an agent $q$ whose vista does not have $\mathcal I$ as a counting interval. Due to \cref{l:undominated2}, these messages are not discarded.

Hence, by our previous analysis of agent interactions and by \cref{l:interaction}, a branching occurs in the vista of $p$ or in the vista of $q$, or the counting interval $\mathcal I$ is acquired by the vista of $q$ and is not lost by the vista of $p$. By \cref{l:undominated1}, $\mathcal I$ is undominated in any vista that acquires it, and therefore the same argument applies to the next blocks of type~(ii).

Thus, in every block of type~(ii), either the number of agents whose vistas have the counting interval $\mathcal I$ increases, or it remains stable and at least one branching occurs in the vista of some agent, or it decreases by $k>0$ while at least $k$ branchings occur in the vistas of some agents.

Note that, if a round of type~(iii) does not occur, a block of type~(ii) is always followed by another block of type~(ii), unless all counting intervals are lost by all vistas. However, even if $\mathcal I$ is lost by all vistas, another undominated counting interval will replace it at the next round of type~(ii).

Track one undominated counting interval at a time, choosing a replacement whenever the current one is lost. In every completed phase, each increase is compensated by a later decrease; only the final phase contributes at most $n-1$ additional increases before total agreement is reached. Since decreases and stable blocks account for at most $b_1$ branchings, there are at most $(n-1)+b_1$ increasing blocks and at most $b_1$ non-increasing blocks. Thus, before a round of type~(iii) occurs, there are at most $(n-1)+2b_1$ disjoint (but not necessarily contiguous) blocks of type~(ii).
\end{proof}

\begin{lemma}\label{l:final3}
There are at most $2b_2$ consecutive blocks of type~(iii), where $b_2$ is the total number of branchings that occur in agents' vistas during rounds of type~(iii), before total agreement on a counting interval that is dominant in the vistas of all agents is achieved.
\end{lemma}
\begin{proof}
Consider the first round~$t$ in such a sequence of blocks of type~(iii). By definition, at this round there is total agreement on a counting interval $\mathcal I$, which is also a counting interval in the collective tree $\mathcal H_t$, due to \cref{l:collective}. Since $\mathcal H_t$ has at least one counting interval, it has a dominant counting interval $\mathcal D$, by \cref{l:total}. The nodes of $\mathcal D$ are ancestors of nodes of $\mathcal I$, and hence all agents' vistas contain all of them. Therefore, by \cref{c:inherit}, there is total agreement on the counting interval $\mathcal D$. By \cref{l:never}, $\mathcal D$ is never lost by any of the agents' vistas nor by the collective tree. Thus, by \cref{l:undominated1}, $\mathcal D$ is undominated in all vistas, as well as dominant in the collective tree, at all rounds~$t'\geq t$.

We will show that eventually $\mathcal D$ becomes the dominant counting interval in all vistas. Consider an agent's vista $\mathcal V$ where $\mathcal D$ is not dominant, albeit undominated. Then, there is another undominated counting interval $\mathcal D'$ in $\mathcal V$. Observe that there cannot be total agreement on $\mathcal D'$. For otherwise $\mathcal D'$ would be a counting interval in the collective tree (by \cref{l:collective}), and it would be undominated in the collective tree as well (by \cref{l:undominated1}), contradicting the fact that $\mathcal D$ is the dominant counting interval in the collective tree.

Since $\mathcal D$ is undominated in all vistas at all rounds~$t'\geq t$, no counting interval other than $\mathcal D$ can be dominant in any vista at any round~$t'\geq t$. Therefore, none of the messages between agents whose vistas contain $\mathcal D'$ and agents whose vistas do not contain $\mathcal D'$ are discarded (because such vistas cannot have the same dominant counting interval). Thus, as we argued in \cref{l:final2}, during each block either the number of agents whose vistas contain the counting interval $\mathcal D'$ increases, or it remains stable and at least one branching occurs in some vista, or it decreases by $k>0$ and at least $k$ branchings occur in the vistas of some agents.

Let $x$ be the number of agents' vistas containing $\mathcal D'$, which varies between $0$ and $n-1$. Every block in which $x$ remains stable accounts for at least one branching, and a decrease by $k>0$ accounts for at least $k$ branchings. After the last branching, if $x>0$, it would increase in every block until total agreement on $\mathcal D'$ was reached, a contradiction. Hence $x$ eventually reaches $0$, and its total increase cannot exceed its total decrease. Therefore, there are at most $b_2$ blocks in which $x$ increases and at most $b_2$ in which it does not, so $\mathcal D'$ is lost by all vistas within $2b_2$ consecutive blocks.

Whenever $\mathcal D'$ is lost while $\mathcal D$ is not yet dominant in all vistas, select another undominated counting interval distinct from $\mathcal D$ and repeat the argument. Tracking one such interval at a time partitions the blocks into disjoint phases, so the same accounting gives at most $b_2$ increasing and $b_2$ non-increasing blocks in total. Hence, within $2b_2$ consecutive blocks of type~(iii), no undominated counting interval other than $\mathcal D$ remains in any vista, and $\mathcal D$ becomes dominant in all vistas.
\end{proof}

\myskip
\mypar{Algorithm correctness.} We conclude this section with a proof of correctness of our finite-state algorithm.

\begin{theorem}\label{t:3}
There is a finite-state universal algorithm that operates in any $\tau$-union-connected anonymous dynamic network of unknown size $n$ (and unknown $\tau$) and stabilizes in at most $\tau(2n^2+n)$ rounds.
\end{theorem}
\begin{proof}
Observe that, as soon as a round of type~(iii) occurs, all subsequent rounds will be of type~(iii). Indeed, as soon as there is total agreement on a counting interval, by \cref{l:never} there will always be total agreement on the same counting interval. Therefore, any execution of our algorithm consists of an alternation between rounds of type~(i) and type~(ii), followed by an infinite sequence of rounds of type~(iii).

By \cref{l:final1}, there are at most $2n-1$ disjoint blocks of type~(i), and by \cref{l:final2} there are at most $n+2b_1-1$ disjoint blocks of type~(ii). The remaining rounds of type~(i) can be grouped into maximal contiguous sequences that are shorter than $\tau$ rounds, and the same goes for the rounds of type~(ii). We call these \emph{spare sequences}.

Note that a spare sequence of type~(ii) that is followed by a round of type~(i) contains at least one round where an agent's vista loses a counting interval; thus, a branching occurs in that vista, due to \cref{l:branch}. If $b_3$ is the total number of branchings that occur in agents' vistas during spare sequences of type~(ii), we may have at most $2b_3+2$ spare sequences in total.

Finally, by \cref{l:final3}, after at most $2b_2$ consecutive blocks of type~(iii), total agreement on a dominant counting interval is achieved. By \cref{c:collective}, at this point all agents return the correct output. Moreover, since all agents now have the same dominant counting interval, they stop updating their states and keep returning the same correct output.

Observe that $b_1$, $b_2$ and $b_3$ count branchings occurring in agents' vistas during three pairwise disjoint sets of rounds. Hence, by \cref{o:branching}, we have $b_1+b_2+b_3\leq n(n-1)$.

We conclude that the algorithm is correct, and stabilization is achieved within
\[
\begin{aligned}
&(2n-1)+(n+2b_1-1)+(2b_2)+(2b_3+2)\\
&\qquad=3n+2(b_1+b_2+b_3)\leq 3n+2n(n-1)=2n^2+n
\end{aligned}
\]
consecutive blocks, and therefore the total stabilization time is at most $\tau(2n^2+n)$ rounds. Note that a generalized vista of height $O(\tau n^2)$ can be encoded in $O(\tau n^4\log n)$ bits, implying that the algorithm is finite-state.
\end{proof}

\section{Concluding Remarks}\label{s:6}
We have proposed the first self-stabilizing universal algorithm for anonymous dynamic networks; this algorithm has a linear stabilization time (\cref{s:4}). We have also provided the first finite-state universal algorithm for anonymous dynamic networks; this algorithm has a quadratic stabilization time (\cref{s:5}). Both algorithms also work in $\tau$-union-connected networks, at the cost of a factor of $\tau$ in their running times (which is optimal, see~\cite{DVdisc}).

It is natural to ask whether a universal algorithm that is both self-stabilizing and finite-state exists. For static networks, such an algorithm is found in~\cite{BV02b}; as for dynamic networks, we gave an optimal solution in \cref{s:3} assuming that the number of agents is known.

Another open problem is to improve the stabilization times of our algorithms. For self-stabilizing algorithms in dynamic networks, we believe that a linear dependence on the garbage coefficient is unavoidable, and therefore our algorithm of \cref{s:4} is asymptotically optimal. As for finite-state algorithms, we believe that a linear stabilization time is achievable. Possibly, an improved analysis of our algorithm in \cref{s:5} may already yield a linear stabilization time.

The generalized theory of history trees developed in \cref{s:5} naturally lends itself to modeling \emph{semi-synchronous} networks, where some agents may unpredictably become inactive and skip a round of interactions and state updates. We leave a detailed study of this setting for future work.

Finally, it would be interesting to investigate universal algorithms that are not only finite-state, but also allow agents to stop sending messages after stabilization has been achieved.

\section*{Acknowledgements}
The authors would like to thank the anonymous reviewers of Theoretical Computer Science for suggestions that improved the readability of this paper.

\bibstyle{plainurl}
\bibliography{selfADN}

\newpage
\appendix

\section{Pseudocode}\label{a:1}
This section contains the pseudocode for all the algorithms described in \cref{s:3,s:4,s:5}. The entry point is the function \texttt{Main()}. Each agent runs an independent instance of this function at every round $t\geq1$ and has private instances of the local variables.

The only primitives are the following functions:
\begin{itemize}
\item \texttt{Input()} returns the agent's input.
\item \texttt{Output()} makes an agent return an output.
\item \texttt{SendToAllNeighbors()} takes a message as an argument and sends it to all agents that share a link with the caller in the current communication round.
\item \texttt{ReceiveFromAllNeighbors()} returns a multiset of messages coming from all incident links. These are the messages that have been passed to the function \texttt{SendToAllNeighbors()} by the neighboring agents in the current round.
\item \texttt{Chop()} takes a vista of a history tree and chops it, eliminating level $L_0$, as described in \cref{s:3}.
\item \texttt{ComputeFrequencies()} takes a vista of a history tree and applies the algorithm in \cite[Theorem~3.2]{DVdisc} on it, which returns either (an estimate of) the frequencies of all inputs or ``Unknown''.
\end{itemize}

\lstset{style=mystyle}

\begin{lstlisting}[caption={Universal self-stabilizing finite-state algorithm for networks of known size $n$ and known dynamic disconnectivity $\tau$ (\cref{s:3})\label{l:1}},captionpos=t,float=ht!,mathescape=true]
# Every agent has a private copy of this internal variable:
myHT        # current vista of the history tree

function Main($n$, $\tau$)   # the network parameters $n$ and $\tau$ are given
    if myHT is not a coherent vista with at most $\tau(2n-2)$ levels or is oversized, then
        myHT := vista with two nodes: a root with a child labeled Input()
    SendToAllNeighbors(myHT)
    receivedMessages := ReceiveFromAllNeighbors()
    allMessages := $\text{receivedMessages} \cup \{\text{myHT}\}$
    minHT := $\arg\min_{\text{HT}\in \text{allMessages}}$ HT.height
    while myHT.height > minHT.height, do Chop(myHT)
    add a child to myHT.bottom and label it Input()   # this is the new bottom
    for all HT in receivedMessages, do
        while HT.height > minHT.height, do Chop(HT)
        match-and-merge HT into myHT
        add a red edge from HT.bottom to myHT.bottom
    if myHT has $\tau(2n-2)+1$ levels, then Chop(myHT)
    F := ComputeFrequencies(myHT)
    if F is "Unknown", then Output($\{(\text{Input(), 100\%)}\}$)
    else Output(F)
\end{lstlisting}

\begin{lstlisting}[caption={Universal self-stabilizing algorithm for networks of unknown size and unknown dynamic disconnectivity (\cref{s:4})\label{l:2}},captionpos=t,float=ht!,mathescape=true]
# Every agent has a private copy of these internal variables:
myHT        # current vista of the history tree
myFlag      # controls the deletion of the first level of HT (either 0 or 1)

function Eval(HT, flag)
    return 2 * HT.height + flag

function Main()
    if myHT or myFlag is malformed, then
        myHT := vista with two nodes: a root with a child labeled Input()
        myFlag := 1
    myMessage := (myHT, myFlag)
    SendToAllNeighbors(myMessage)
    receivedMessages := ReceiveFromAllNeighbors()
    allMessages := $\text{receivedMessages} \cup \{\text{myMessage}\}$
    (minHT, minFlag) := $\arg\min_{\text{(HT, flag)}\in \text{allMessages}}$ Eval(HT, flag)
    while myHT.height > minHT.height, do Chop(myHT)
    add a child to myHT.bottom and label it Input()   # this is the new bottom
    for all (HT, flag) in receivedMessages, do
        while HT.height > minHT.height, do Chop(HT)
        match-and-merge HT into myHT
        add a red edge from HT.bottom to myHT.bottom
    myFlag := 1 - minFlag
    if myFlag = 1, then Chop(myHT)
    F := ComputeFrequencies(myHT)
    if F is "Unknown", then Output($\{(\text{Input(), 100\%)}\}$)
    else Output(F)
\end{lstlisting}

\begin{lstlisting}[caption={Universal finite-state algorithm for networks of unknown size and unknown dynamic disconnectivity (\cref{s:5})\label{l:3}},captionpos=t,float=ht!,mathescape=true]
# Every agent has a private copy of this internal variable:
myHT        # current vista of the history tree (initially a root with a child labeled Input())

function Main()
    SendToAllNeighbors(myHT)
    receivedMessages := ReceiveFromAllNeighbors()
    relevantVistas := empty multiset
    for all HT in receivedMessages, do
        if myHT and HT do not have isomorphic dominant counting intervals, then
            relevantVistas := $\text{relevantVistas} \cup \{\text{HT}\}$
    if relevantVistas is not empty or myHT has no dominant counting interval, then
        add a child to myHT.bottom and label it Input()   # this is the new bottom
        for all HT in relevantVistas, do
            match-and-merge HT into myHT
            add a red edge from HT.bottom to myHT.bottom
    if myHT has a dominant counting interval, then
        F := frequencies computed using the dominant counting interval of myHT
        Output(F)
    else Output($\{(\text{Input(), 100\%)}\}$)
\end{lstlisting}

\end{document}